\newcommand{\yes}{\ensuremath{\text{\ding{51}}}}
\newcommand{\no}{\ensuremath{\text{\ding{55}}}}
\newtheorem{example}{Example}
\newtheorem{definition}{Definition}
\newtheorem{theorem}{Theorem}
\newtheorem{remark}{Remark}
\newtheorem{corollary}{Corollary}
\newcommand{\m}[1]{\ensuremath{\mathtt{#1}}}
\newcommand{\pid}[1]{\m{#1}}
\newcommand{\port}[1]{\m{#1}}
\newcommand{\pidp}[2]{\pid{#1}.\port{#2}}
\newcommand{\code}[1]{\texttt{\upshape #1}}
\newcommand{\nil}{\boldsymbol 0}
\newcommand{\com}[2]{#1\;\code{-\hspace{-0.3mm}>}\;#2}
\newcommand{\valcom}[4]{\com{#1 \carr{#2}}{#3}.{#4}}
\newcommand{\gencom}{\valcom{\pid p}{e}{\pid q}{x}}
\newcommand{\genrecv}{\pid q.x?v}
\newcommand{\genchoice}{\pid q[\ell]}
\newcommand{\sel}[3]{\com{#1}{#2 [#3]}}
\newcommand{\gensel}{\sel{\pid p}{\pid q}{\ell}}
\newcommand{\cond}[3]{\m{if}\, #1 \, \m{then} \, #2 \, \m{else} \, #3}
\newcommand{\gencond}{\cond{\pid p.e}{C_1}{C_2}}
\newcommand{\rec}[3]{\m{def} \, #1  =  #2 \, \m{in} \, #3}
\newcommand{\genrec}{\rec{X}{C_2}{C_1}}
\newcommand{\gencall}{X}
\newcommand{\rname}[2]{\ensuremath{\left\lfloor\mbox{{#1}$|${#2}}\right\rceil}}
\newcommand{\precongr}{\mathrel{\preceq}}
\newcommand{\pn}{\mathop{\mathsf{pn}}}
\newcommand{\carr}[1]{\langle #1  \rangle}
\newcommand{\psend}[2]{{#1}!\carr{#2}}
\newcommand{\precv}[2]{#1?#2}
\newcommand{\psel}[2]{{#1}\oplus#2}
\newcommand{\pbranch}[2]{{#1}\&{#2}}
\newcommand{\ppar}{\mathrel{\boldsymbol{|}}}
\newcommand{\proc}[3]{#1 \triangleright_{#2} #3}
\newcommand{\wtil}{\widetilde}
\newcommand{\eval}[2]{\mathrel{\downarrow^{#1}_{#2}}}
\newcommand{\epp}[2]{[\![#1]\!]_{#2}}
\newcommand{\conn}{\mathcal G} 
\newcommand{\cconn}{\mathcal C}
\newcommand{\astate}{\mathcal A} 
\newcommand{\toC}{\redc{\cconn}}
\newcommand{\toCEPP}{\redc{\epp{\conn}{}}}
\newcommand{\mem}{\m{var}}
\newcommand{\redc}[1]{\rightsquigarrow_{#1}}
\newcommand{\redM}{\redc{\mathcal G}}
\newcommand{\med}[1]{\mathsf{#1}}
\newlength {\trlabelwidth}
\newcommand{\tr}[1][]{%
	\IfStrEq{\detokenize{#1}}{}{%
		{\xrightarrow{\hbox to 8pt {}}}%
	}{%
		\settowidth{\trlabelwidth}{\ensuremath{#1}}%
		\addtolength{\trlabelwidth}{2pt}%
		\ifdim\trlabelwidth < 8pt
			\xrightarrow{\hbox to 8pt {\scriptsize\hfil$\smash{#1}$\hfil}}%
		\else
			\xrightarrow{\hspace{1pt}\smash{#1}\hspace{1pt}}%
		\fi
	}
}
\newcommand{\thru}{\,\m{thru}\,}
\newcommand{\wwedge}{\mbox{ and }}
\newcommand{\dcform}{\Phi}
\newcommand{\flow}{\mathbin{\rightarrow}}
\newcommand{\nomid}{\hphantom{{}\wedge{}}}
\newcommand{\til}{\tilde}
\title{Connectors meet Choreographies}
\author{Farhad Arbab$^1$
  \and Lu\'is Cruz-Filipe$^2$
  \and Sung-Shik Jongmans$^{3,4}$
  \and Fabrizio Montesi$^2$}
\date{${}^1$ Centrum Wiskunde \& Informatica, the Netherlands\\
  ${}^2$ Dept.\ Mathematics and Computer Science, Univ.\ Southern Denmark\\
  ${}^3$ Dept.\ Computer Science, Open University of the Netherlands\\
  ${}^4$ Dept.\ Computing, Imperial College London}
\begin{document}

\maketitle

\begin{abstract}
We present Cho-Reo-graphies (CR), a new language model that unites two powerful programming paradigms for concurrent software based on communicating processes: Choreographic Programming and Exogenous Coordination.
In CR, programmers specify the desired communications among processes using a \emph{choreography},
and define how communications should be concretely animated by \emph{connectors} given as constraint automata (e.g., synchronous barriers and asynchronous multi-casts).
CR is the first choreography calculus where different communication semantics
(determined by connectors) can be freely mixed; since connectors are user-defined, CR also supports many communication semantics that were previously unavailable for choreographies.
We develop a static analysis that guarantees that a choreography in CR and its user-defined connectors are compatible, define a compiler from choreographies to a process calculus based on connectors, and prove that compatibility guarantees deadlock-freedom
of the compiled process implementations.
\end{abstract}

\section{Introduction}
\label{sec:intro}

\paragraph{Background.}
Programming concurrent software is hard: given the specification of the desired
interactions among processes, implementation is error-prone even for experienced
programmers \cite{LLLG16,LPSZ08}.
The challenge of concurrent programming has driven decades of research on new
programming models. A particularly fruitful idea was to provide a native
language abstraction for interaction, rather than modelling it as a side-effect.
Two research lines in particular were built upon this idea, but following very
different directions.

The first research line is that on \emph{Choreographic Programming}~\cite{M13:phd}.
Central to Choreographic Programming is the \emph{choreography}, a programming
artefact that specifies a concurrent system in terms of the interactions among
its constituent processes, by using an ``Alice and Bob'' notation that disallows
writing mismatched I/O actions (e.g., a send without a corresponding receive).
Through \emph{EndPoint Projection} (EPP), choreographies can be compiled to
faithful and deadlock-free implementations in process languages~\cite{CM13}.

Previous work studied models for choreographies with different interaction
semantics (e.g., synchronous, asynchronous, multi-cast). Common to all
existing models of Choreographic Programming
is the fact that the nature of interactions
is ``hardcoded'': each model proposes new syntax and semantics, so results have
to be proven from scratch every time and cannot be combined. For example, none
of the existing models supports mixing synchronous with asynchronous
interactions within the same choreography. This is a serious limitation, which
raises the challenge of finding a unifying framework. Furthermore, there is still no
indication of how other interesting interaction semantics (e.g., barriers) can
be introduced to choreographies.

The second research line is that on \emph{Exogenous
Coordination}~\cite{arbab1996iwim,arbab1998you,lau2005exogenous}, where
interaction protocols and process code are developed separately.
Process programs are then modularly composed with
protocols, given as \emph{connectors}, which dictate interactions
by accepting/offering messages from/to processes. Connectors offer an
elegant way of programming different semantics for interactions, starting from
basic instances (e.g., synchronous and asynchronous point-to-point channels) and
then composing them to create more sophisticated protocols.

The limitation of Exogenous Coordination is that we do not have a
global system view when composing processes with connectors: programmers cannot
define the intended flow of information among processes when a connector is used
many times, or if many connectors are used in different parts of a system.
Hence, incompatibilities between process code and connectors may cause
deadlocks, which cannot happen in Choreographic Programming.
The two paradigms thus have complementary strengths.

\paragraph{Contribution.}
We develop the first integration of the best aspects of Choreographic
Programming and Exogenous Coordination:
a new calculus of choreographies, called \emph{Cho-Reo-graphies}
(CR), whereby the process interactions specified in a choreography are
animated by arbitrary, user-defined, connectors based on the Exogenous
Coordination language Reo. CR allows for mixing different connectors in the
same choreography, making it for the first time possible to write choreographies
where different interactions can have different communication semantics.
Furthermore, by tapping into the expressivity of connectors, we can endow
choreographies with hitherto unexplored communication semantics, such as
alternators or barriers. This makes CR more expressive and a generalisation of
existing models of Choreographic Programming. Through EPP, choreographies can be
compiled into concurrent implementations in a process language. In these
implementations, the same connectors as in the original choreography animate
interactions among processes. We show that these processes are deadlock-free,
provided that the original choreography is compatible with the connectors.

Mixing different communication semantics (i.e., connectors)
in choreographies produces
new challenges of formalisation and decidability. Solving these is our main technical contribution.
On the formalisation front, we have to balance expressiveness and
comprehensibility: the formal semantics of the calculus should be easy enough to
explain and understand, without sacrificing the expressive power of
connectors. We address this challenge with a new labelled reduction semantics
for choreographic interactions, where labels act as the interface with
connectors.
On the decidability front, we prove that (1) deadlock-freedom is generally undecidable in our calculus, but
(2) we can establish deadlock-freedom for a large subset of the language. Our
proof of (1) shows that undecidability is a direct consequence of the expressive
power that connectors introduce to choreographies. We address (2) by designing
a new decidable static analysis (compatibility) that is made possible by
the careful design of labelled reductions for interactions.

By leveraging existing work, distributed implementations of Reo connectors can
be automatically generated and deployed on distributed systems (e.g., in
Scala~\cite{proencca2011synchronous,proencca2012dreams} or Java~\cite{jongmans2015partially}); our approach is compatible with this work.
As such, a practical tool based on CR can
in principle be built on top of existing code
generators for distributed implementations of Reo connectors.
In general, we believe that our results represent the beginning of an interesting research
line on concurrent programming; we discuss future directions in \S~\ref{sec:conclusions}.

\paragraph{Structure.}
We motivate our work with an example in \S~\ref{sec:example}, and introduce
connectors in~\S~\ref{sec:reo}.
CR is described in \S~\ref{sec:chor} together with results on deadlock-freedom.
In \S~\ref{sec:processes} we introduce the target process language for EndPoint
Projection and prove correctness of synthesised process implementations.
We conclude in \S~\ref{sec:conclusions}, discussing directions in
which this work can be extended.
\S~\ref{sec:conclusions} concludes, discussing directions in which this work can
be extended.

\paragraph{Related work.}
We already covered the main references to previous work and how it falls short of serving our aim. We briefly recap related work.

Choreographies have been studied in different settings with fixed communication semantics, including synchronous~\cite{CHY12,CM16b,LGMZ08}, asynchronous point-to-point~\cite{CM13,CM17:ice,G16,MY13}, one-to-many~\cite{CMSY16,CDYP16},
and many-to-one~\cite{CLMSW16,LMMNSVY15}.
In our model, these semantics are simply instances of what we can do.
But since we tap into the generality of Reo connectors, we can also do more (e.g.,
we illustrate how to use barriers).
CR is not the first model where choreographies may deadlock: this is common for
settings with realistic communication semantics (e.g.,
\cite{CM16:pc_long,LNN16}).
However, it is the first model with arbitrary communication semantics.
Our development also extends the line of work on out-of-order execution for
choreographies, initiated in~\cite{CM13}, where non-interfering interactions may proceed concurrently. This style allows for more safe behaviours in the semantics of choreographies (by
swapping non-interfering communications), which the programmer gets for free (concurrency is
inferred). As in many other works, out-of-order execution also simplifies our syntax: we do
not need to provide for
a parallel operator in choreographies, since most parallel behaviour is
already captured by out-of-order execution (cf.~\cite{CM13} for details).
As in previous work, CR supports asynchronous behaviour without requiring the programmer to
reason about it in choreographies: communications are still specified as atomic interactions, which
may be asynchronously reduced in a safe way (cf.~\cite{CM13,DY13,MY13}).

In Exogenous Coordination, interaction protocols for communicating processes,
called connectors, are programmed separately from the internal code of each process. This enables a
compositional approach for developing protocols, where complex protocols can be built by
assembling simpler ones. 
Exogenous Coordination has been studied extensively over the last two
decades~\cite{arbab1996iwim,arbab1998you,lau2005exogenous}.
Examples of models of Exogenous Coordination are the algebras of
connectors~\cite{bliudze2008algebra,bliudze2010causal}, the algebra of stateless connectors~\cite{bruni2006basic}, and
constraint automata~\cite{BSAR06}; examples of languages are (interactions in)
BIP~\cite{basu2011rigorous,basu2006modeling}, Ptolemy~\cite{buck1994ptolemy,Ptolemaeus:14:SystemDesign}, and
Reo~\cite{Arb04,Arb11}.

\section{Motivating Example \& Approach}
\label{sec:example}

We present an example to introduce the concept of choreographies and the problem
we are interested in studying. This example will be used as running example
throughout the whole article.

\begin{example}[Book sale]
\label{ex:motivating}
Alice ($\pid a$) wants to buy a book from seller Carol ($\pid c$), facilitated by a bank ($\pid b$) and a shipper ($\pid s$).
First, Alice sends the title of the book to Carol.
Carol then replies to Alice with the price of the book.
If Alice is happy with the price, she notifies Carol, the bank, and the shipper that the purchase proceeds.
Alice subsequently sends the money to the bank (who transfers it to Carol's account), and Carol sends the book to the shipper (who dispatches it to Alice).
The choreography for this scenario looks as follows:
\setlength{\belowdisplayskip}{0pt}
\begin{align*}
	1.\quad & \com{\pid a\langle\mathit{title}\rangle}{\pid c};\
	\com{\pid c\langle\mathit{price}\rangle}{\pid a};
\\
	2.\quad & \m{if} \ {\pid a.\mathit{happy}} \
	\m{then}\ (\ \sel{\pid a}{\pid c}{\mathit{ok}};\
        \sel{\pid a}{\pid b}{\mathit{ok}};\ \sel{\pid a}{\pid s}{\mathit{ok}};
	\com{\pid a\langle\mathit{money}\rangle}{\pid b};\
        \com{\pid c\langle\mathit{book}\rangle}{\pid s}\ )
\\
        &\phantom{\m{if} \ {\pid a.\mathit{happy}} \ }
	\m{else}\ (\ \sel{\pid a}{\pid c}{\mathit{ko}};\
	\sel{\pid a}{\pid b}{\mathit{ko}};\ \sel{\pid a}{\pid s}{\mathit{ko}}\ )
        \hspace*{12em}\qed
\end{align*}
\end{example}

In previous choreography models, the nature of the interactions is fixed: depending on the model,
these interactions are either all synchronous or all asynchronous.
This is not flexible enough, because requirements may be different for each interaction.

\begin{example}[Book sale]
\label{ex:motivating+reqs}
The book sale scenario has the following requirements:

	\begin{itemize}
		\item
		Because there are no strict timing constraints, it is reasonable for Alice and Carol to communicate asynchronously in line 1.

		\item
		Because the same label ($\mathit{ok}$ or $\mathit{ko}$) is sent from Alice to Carol, the bank, and the shipper, it makes sense to combine these communications in a multi-cast.

		\item
		It is better for Alice to send her money to the bank as late as possible (e.g., because she receives interest on her money).
		Therefore, Alice does not want to send her money until she knows the others have received her $\mathit{ok}$-label.
		Thus, the multi-cast of this label should be synchronous (i.e., handshake between Alice, Carol, the bank, and the shipper).

		\item
		Alice and Carol may not trust each other: Alice does not want to
                send money before Carol has sent the book, and vice versa.
		To resolve this impasse,
                Alice and Carol should synchronously (barrier-like) send
                money and book, to ensure each of them holds her end of the
                bargain; the receives by the bank and the shipper may
                subsequently proceed asynchronously.
		\qed
	\end{itemize}
\end{example}

\noindent
As we show in the next sections, our model is powerful enough to express all
these interactions (and more).
The key idea is to tag interactions with the name of the particular connector through which they transpire.
For instance, a value communication from $\pid p$ to $\pid q$ through a synchronous channel $\med{sync}$ is expressed as $\com{\pid p}{\pid q} \thru \med{sync}$, while the same communication through an asynchronous channel $\med{async}$ is expressed as $\com{\pid p}{\pid q} \thru \med{async}$.
If all interactions in a choreography transpire through the same (type of) connector, as in all existing choreography approaches, tags become redundant and can be omitted.

\section{Reo and Constraint Automata}
\label{sec:reo}

We view processes in a concurrent system as black boxes with interfaces consisting of \emph{ports}.
For a process to send (receive) a message to (from) another process, it performs an output action (input action) on one of its own output ports (input ports), but without specifying a receiver (sender).
Instead, a separate connector, connected to the ports of the processes, decides how messages flow from senders' output ports to receivers' input ports.

\emph{Reo}~\cite{Arb04,Arb11} is a graphical, dataflow-inspired language for connectors. As we focus primarily on semantics in the rest of this paper, we only present Reo's semantic formalism: \emph{constraint automata}~\cite{BSAR06}.
The transitions of a constraint automaton model possible synchronous message flows (i.e., interactions) through a connector in a particular state.
Constraint automata are parametrised over a language of constraints $\dcform$ to specify transition labels.
Depending on the required level of expressiveness, different instantiations of $\dcform$ may be considered.
In this work, we consider $\dcform_{P,M}$ to be a language of constraints over two sets $P$ and $M$, of ports
and \emph{memory cells} (storage space local to a connector).
Constraints are finite sets of formulas of the form $p_1 \flow p_2$ (port $p_1$ passes a message to port $p_2$),
$m_1 \flow m_2$ (cell $m_1$ passes a message to cell $m_2$), $p \flow m$ (port $p$ passes a message to cell $m$), and
$m \flow p$ (cell $m$ passes a message to port $p$).\footnote{We can formalise $\dcform$ as an equational theory as in~\cite{BSAR06,Jon16}, but this is unnecessary for our development.}
Furthermore, we require that all terms on the right-hand side of formulas in the same constraint in a transition label be
distinct, to ensure that every port\slash cell is assigned a unique message (e.g., $p_1 \flow m \wedge p_2 \flow m$ is forbidden, but $p \flow m_1 \wedge p \flow m_2$ is allowed).

\begin{definition}
A constraint automaton is a tuple $(S, P, M , \tr , s_0 , \mu_0)$, where $S$ is
a finite set of states, $P$ is a finite set of ports, $M$ is a finite set of
memory cells, $\tr \subseteq S \times \dcform_{P,M} \times S$ is a transition
relation, $s_0 \in S$ is an initial state, and $\mu_0$ is an initial memory
snapshot mapping the memory cells in $M$ to their initial content.
\end{definition}

A transition $(s , \phi , s') \in \tr$, which we write $s \tr[\phi] s'$ for
short, means that, from state $s$, a subset of the ports in $P$ can interact
according to $\phi$ and the automaton goes to state $s'$.
Constraints in~$\dcform_{P,M}$ also control the evolution of memory snapshots as transitions are made.
Instead of formalizing this separately (e.g., in terms of runs and languages~\cite{Jon16}), we combine it directly in
the semantics of our choreographies, in \S~\ref{sec:chor}.

In this work, we restrict ourselves to a subset of constraint automata that satisfy two additional assumptions on their transitions.
\label{constraints}
First, the occurrences of each port in transition labels are either all as
sender (the port occurs only on the left-hand side of constraints) or all
as receiver (it occurs only on the right-hand side of constraints); this constraint is imposed on the whole automaton.
For instance, a constraint automaton cannot have two transitions where one is
labelled by $p_1\flow p_2$ and the other by $p_2\flow p_3$.
Secondly, the transition relation of each automaton is deterministic on the ports used, i.e., for any given state $s$, if there
are two distinct transitions $s\tr[\phi_1] s_1$ and $s\tr[\phi_2] s_2$, then the sets of ports used in $\phi_1$ and $\phi_2$ must
be distinct (but they can overlap, or one be a strict subset of the other).
The first assumption simplifies defining the semantics of choreographies in \S~\ref{sec:chor}; the second
assumption ensures that this semantics is deterministic, in line with previous work on
choreographies.

Port automata are minimalistic, meaning that a port automaton defines only the orders in which ports may be used.
On the contrary, a port automaton says nothing about the actual messages being exchanged through ports.
Notably, a port automaton does not define which particular message (sent through an output port by a sender) a receiver receives (through an input port).
Port automata, thus, \emph{underspecify} connector behaviour.

\begin{figure}[t]
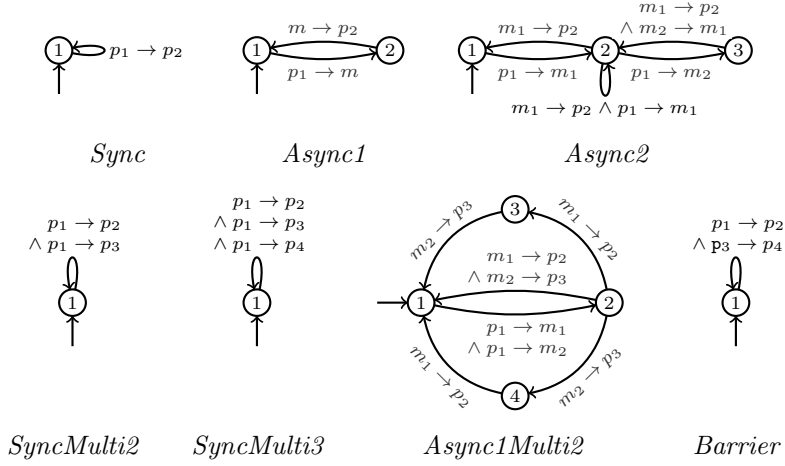

	\centering
	\begin{tabular}{@{}c@{\qquad}c@{\qquad}c@{}}
		\tikzautomone[1]{p_1 \flow p_2}{}{}{}{}{}{init}{}
	&	\begin{tikzautom}
			\state{S1}{0,0}{1}
			\state{S2}{right of=S1}{2}
			\trans{S1}{S2}{from16,to9}{below}{p_1 \flow m}
			\trans{S2}{S1}{from8,to1}{above}{m \flow p_2}
			\init{below of}{S1}
		\end{tikzautom}
	&	\begin{tikzautom}
			\state{S1}{0,0}{1}
			\state{S2}{right of=S1}{2}
			\state{S3}{right of=S2}{3}
			\trans{S1}{S2}{from16,to9}{below}{p_1 \flow m_1}
			\trans{S2}{S1}{from8,to1}{above}{m_1 \flow p_2}
			\trans{S2}{S2}{from12,to13}{below}{\LINES[l]{m_1 \flow p_2 \wedge p_1 \flow m_1}}
			\trans{S2}{S3}{from16,to9}{below}{p_1 \flow m_2}
			\init{below of}{S1}
			\trans{S3}{S2}{from8,to1}{above}{\LINES[l]{\nomid m_1 \flow p_2 \\ {} \wedge m_2 \flow m_1}}
		\end{tikzautom}
	\\	\vspace{-5.5pt}
	\\	\reo{Sync} 
	&	\reo{Async1} 
	&	\reo{Async2} 
	\end{tabular}

	\vspace{5.5pt}
	\begin{tabular}{@{}c@{\qquad}c@{\qquad}c@{\qquad}c@{}}
		\tikzautomone[1]{}{}{\LINES[l]{\nomid p_1 \flow p_2 \\ {} \wedge p_1 \flow p_3}}{}{}{}{init}{}
	&	\tikzautomone[1]{}{}{\LINES[l]{\nomid p_1 \flow p_2 \\ {} \wedge p_1 \flow p_3 \\ {} \wedge p_1 \flow p_4}}{}{}{}{init}{}
	&	\begin{tikzautom}
			\state{S1}{0,0}{1}
			\state{S3a}{above right of=S1}{3}
			\state{S3b}{below right of=S1}{4}
			\state{S2}{below right of=S3a}{2}
			\trans{S1}{S2}{from16,to9}{below}{\LINES[l]{\nomid p_1 \flow m_1 \\ {} \wedge p_1 \flow m_2}}
			\trans{S2}{S1}{from8,to1}{above}{\LINES[l]{\nomid m_1 \flow p_2 \\ {} \wedge m_2 \flow p_3}}
			\trans{S2}{S3a}{from5,to16}{above,sloped}{m_1 \flow p_2}
			\trans{S2}{S3b}{from12,to1}{below,sloped}{m_2 \flow p_3}
			\trans{S3a}{S1}{from9,to4}{above, sloped}{m_2 \flow p_3}
			\trans{S3b}{S1}{from8,to13}{below, sloped}{m_1 \flow p_2}
			\init{left of}{S1}
		\end{tikzautom}
	&	\tikzautomone[1]{}{}{\LINES[l]{\nomid p_1 \flow p_2 \\ {} \wedge \pid p_3 \flow p_4}}{}{}{}{init}{}
	\\	\vspace{-5.5pt}
	\\	\reo{SyncMulti2} 
	&	\reo{SyncMulti3} 
	&	\reo{Async1Multi2} 
	&	\reo{Barrier} 
	\end{tabular}

	\caption{Example constraint automata.}
	\label{fig:aut}
\end{figure}

\begin{example}
Fig.~\ref{fig:aut} shows example constraint automata for useful connectors.
\reo{Sync} models a synchronous channel, which indefinitely lets two processes
synchronously send and receive a message through ports $p_1$ and $p_2$.
\reo{Async1} models an asynchronous channel with a 1-capacity buffer (using a memory cell~$m$).
Indefinitely, first, this connector lets a process send a message through port $p_1$; subsequently, it lets a process receive the message through port $p_2$.
\reo{Async2} models an asynchronous channel with a 2-capacity buffer.
\reo{SyncMulti2} and \reo{SyncMulti3} model synchronous multi-cast connectors for two and three receivers.
Indefinitely, this connector lets three processes synchronously send and receive a message, from port $p_1$ to ports $p_2$ and $p_3$.
\reo{Async1Multi2} models an asynchronous multi-cast connector for two receivers.
Indefinitely, first, the connector lets a process send a message (lower middle transition); then, it lets two processes receive the message, either one after the other (top or bottom transitions) or simultaneously (upper middle transition).
\reo{Barrier} models a barrier send/receive connector.
Indefinitely, this connector lets two processes synchronously send and receive a message through ports $p_1$ and $p_2$, while it synchronously lets processes synchronously send and receive another message through ports $p_3$ and $p_4$.
\qed
\end{example}

Choreography programmers can model connectors by explicitly defining constraint automata.
Alternatively, programmers can model connectors by \emph{composing} constraint automata from basic primitive constraint automata, using a synchronous product operator~\cite{BSAR06}.
A significant advantage of this latter approach is that there exists an intuitive and user-friendly graphical syntax for constraint automaton product expressions.
In this graphical syntax, called \emph{Reo}~\cite{Arb04,Arb11}, programmers draw constraint automaton product expressions as data-flow graphs between (ports of) processes.
As an example, Figure~\ref{fig:reo} shows Reo connectors for the constraint automata in Figure~\ref{fig:aut}.
Reo conveniently hides from choreography programmers the intimidating act of explicitly composing constraint automata, without sacrificing generality:
Reo is complete for constraint automata (under the instantiation of $\dcform$ considered in this paper), meaning that every constraint automaton can be expressed as a Reo connector~\cite{DBLP:conf/coordination/ArbabBBRS05,DBLP:journals/fuin/BaierKK14}.
Moreover, tooling exists to animate flows of messages through Reo connectors (\url{http://reo.project.cwi.nl}).

\begin{figure}[t]
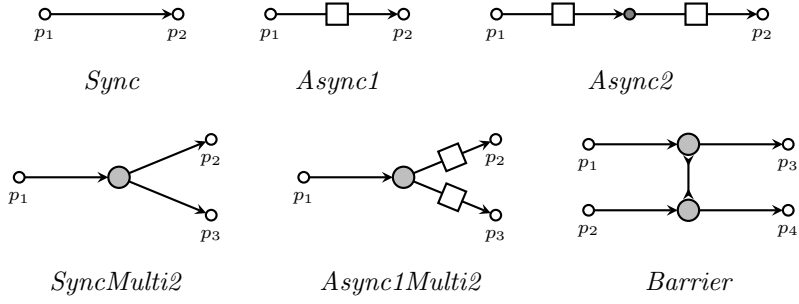

	\centering
	\begin{tabular}{@{}c@{\qquad}c@{\qquad}c@{}}
		\begin{tikzcirc}
			\port{P1}{0,0}{below}{p_1}
			\port{P2}{right of=P1}{below}{p_2}
			\sync{P1}{P2}
		\end{tikzcirc}
	&	\begin{tikzcirc}
			\port{P1}{0,0}{below}{p_1}
			\port{P2}{right of=P1}{below}{p_2}
			\fifoo{P1}{P2}{}
		\end{tikzcirc}
	&	\begin{tikzcirc}
			\port{P1}{0,0}{below}{p_1}
			\port[intern]{P2}{right of=P1}{below}{}
			\port{P3}{right of=P2}{below}{p_2}
			\fifoo{P1}{P2}{}
			\fifoo{P2}{P3}{}
		\end{tikzcirc}
	\\	\vspace{-5.5pt}
	\\	\reo{Sync} 
	&	\reo{Async1} 
	&	\reo{Async2} 
	\end{tabular}

	\vspace{5.5pt}

	\begin{tabular}{@{}c@{\qquad}c@{\qquad}c@{}}
		\begin{tikzcirc}
			\reonode{N}{0,0}{}{}
			\port{P1}{left of=N,node distance=.75*\dist cm}{below}{p_1}
			\port{P2}{$(N)+(22.5:.75*\dist cm)$}{below}{p_2}
			\port{P3}{$(N)+(-22.5:.75*\dist cm)$}{below}{p_3}
			\sync{P1}{N}
			\sync{N}{P2}
			\sync{N}{P3}
		\end{tikzcirc}
	&	\begin{tikzcirc}
			\reonode{N}{0,0}{}{}
			\port{P1}{left of=N,node distance=.75*\dist cm}{below}{p_1}
			\port{P2}{$(N)+(22.5:.75*\dist cm)$}{below}{p_2}
			\port{P3}{$(N)+(-22.5:.75*\dist cm)$}{below}{p_3}
			\sync{P1}{N}
			\fifoo{N}{P2}
			\fifoo{N}{P3}
		\end{tikzcirc}
	&	\begin{tikzcirc}
			\coordinate (X) at (0,0);
			\reonode{N1}{$(X)+(0,.25*\dist cm)$}{}{}
			\reonode{N2}{below of=N1,node distance=.5*\dist cm}{}{}
			\port{P1}{left of=N1,node distance=.75*\dist cm}{below}{p_1}
			\port{P2}{left of=N2,node distance=.75*\dist cm}{below}{p_2}
			\port{P3}{right of=N1,node distance=.75*\dist cm}{below}{p_3}
			\port{P4}{right of=N2,node distance=.75*\dist cm}{below}{p_4}
			\syncdrain{N1}{N2}
			\sync{P1}{N1}
			\sync{P2}{N2}
			\sync{N1}{P3}
			\sync{N2}{P4}
		\end{tikzcirc}
	\\	\vspace{-5.5pt}
	\\	\reo{SyncMulti2} 
	&	\reo{Async1Multi2} 
	&	\reo{Barrier} 
	\end{tabular}

	\caption{Example Reo connectors for the constraint automata in Fig.~\ref{fig:aut}.}
	\label{fig:reo}
\end{figure}

We use (constraint) automata and (Reo) connectors interchangeably: at the semantics level, we use automata, while at the syntax level, we use connectors.

\section{Cho-Reo-graphies}
\label{sec:chor}


We now present \emph{Cho-Reo-graphies} (CR): our choreography calculus that combines choreographies with Reo connectors.
A choreography describes the behaviour of a set of processes.
For simplicity, we assume that values are untyped; treating value types is straightforward and analogous to~\cite{CHY12,CM13,CM16:pc_long}.

\begin{figure}[t]
\setlength{\abovedisplayskip}{0pt}
\setlength{\belowdisplayskip}{0pt}
\begin{align*}
C & ::= \tilde\eta\thru\gamma;C \mid \gencond \mid \genrec \mid \gencall \mid \nil
\\
\eta & ::= \gencom \mid \gensel \mid \fbox{\strut$\genrecv$} \mid \fbox{\strut$\genchoice$}
\end{align*}
\caption{Cho-Reo-graphies, syntax. The boxed terms are runtime terms, necessary for defining
    the semantics, but which are not meant to be used by programmers.}
\label{fig:cc_syntax}
\end{figure}

The syntax of CR is displayed in Fig.~\ref{fig:cc_syntax}.
We range over choreographies using $C, C', \ldots$;
over processes using $\pid p, \pid q, \ldots$;
over interactions using $\eta, \eta', \ldots$ (and over sets of them using $\tilde \eta, \tilde \eta', \tilde \eta_1, \ldots$);
over connector names using $\gamma, \gamma', \ldots$;
over procedure names using $X$;
over (side-effect free) expressions using $e, e', \ldots$;
over values using $v, v', \ldots$ ;
and over selection labels using $\ell, \ell', \ldots$
Each process $\pid p$ 
owns a finite set of local variables $\mem_{\pid p}=\{x_1,\ldots,x_n\}$, and
expressions are assumed to be inductively defined including $\mem_{\pid p}$ as base cases.

We comment briefly on label selections.
It is standard practice in Choreographic Programming to distinguish value communications, which are
used to exchange data, from label selections, which are used to propagate decisions regarding
control flow. In the choreography in Ex.~\ref{ex:motivating}, Alice uses label selections
($\mathit{ok}$ or $\mathit{ko}$) to communicate her choice of whether to buy the book or not to
Carol and the bank. Although this can be encoded using value communications~\cite{CM16b}, it is
useful to distinguish them, as they are usually treated differently in implementations. Also,
as discussed in \S~\ref{sec:processes}, label selections are instrumental in generating process
implementations automatically.
The syntax of labels is unspecified.

We write $\com{\pid p\carr e}{\{\pid q_1.x_1,\ldots,\pid q_n.x_n\}}$ (value multicast) and $\sel{\pid p}{\{\pid q_1,\ldots,\pid
q_n\}}\ell$ (label multicast)
to abbreviate $\{\valcom{\pid p}e{\pid q_1}{x_1},\ldots,\valcom{\pid p}e{\pid q_n}{x_n}\}$ and
$\{\sel{\pid p}{\pid q_1}\ell,\ldots,\sel{\pid p}{\pid q_n}\ell\}$, respectively.
Also, we write $\com{\pid p\carr x}{\pid q}$ for $\com{\pid p\carr x}{\pid q.\mathit{x}}$ (i.e., if sender $\pid p$ sends the value in local variable $x$, and if receiver $\pid q$ stores the received value in a local variable with the same name, we omit the name).
If $\tilde \eta$ is a singleton, we omit curly braces.

\begin{example}[Book sale]
\label{ex:motivating+cr}
The choreography
for our running example (Ex.~\ref{ex:motivating}) can be written as follows in CR.
\setlength{\belowdisplayskip}{0pt}
\begin{align*}
	1.\quad & \com{\pid a\langle\mathit{title}\rangle}{\pid c}\thru\med{a2c};
\\	2.\quad & \com{\pid c\langle\mathit{price}\rangle}{\pid a}\thru\med{c2a};
\\	3.\quad & \m{if} \ {\pid a.\mathit{happy}}\ \m{then}\ (\ \sel{\pid a}{\{ \pid c, \pid b, \pid s\}}{\mathit{ok}}\thru\med{a2cbs};
\\	4.\quad	& \phantom{\m{if} \ {\pid a.\mathit{happy}\ } \m{then}\ (\ }
			\{ \com{\pid a\langle\mathit{money}\rangle}{\pid b},
\com{\pid c\langle\mathit{book}\rangle}{\pid s} \}\thru\med{ac2bs};\nil)
\\	5.\quad	& \phantom{\m{if} \ {\pid a.\mathit{happy}}\ }
			\m{else}\ \left(\ \sel{\pid a}{\{ \pid c, \pid b, \pid s\}}{\mathit{ko}}\thru\med{a2cbs};\nil \right)
                        \hspace*{10em}\qed
\end{align*}
Our notation suggests that Alice multi-casts $\mathit{ok}$ (or $\mathit{ko}$) to Carol, the bank, and the shipper.
It is important to understand, though, that this notation is really just syntactic sugar.
Ultimately, only the behaviour of connector $\med{a2cbs}$ determines how exactly the three communications represented by
$\sel{\pid a}{\{ \pid c, \pid b, \pid s\}}{\ell}$ transpire.
\end{example}

The semantics of most terms is standard.
In $\gencond$, process $\pid p$ evaluates expression $e$; if this results in $\mathit{true}$, the choreography
proceeds as $C_1$, and otherwise, as $C_2$.
In $\genrec$, procedure $X$ is defined as $C_2$; it can then be invoked as $X$ from both $C_1$ and $C_2$.
$\nil$ indicates successful termination.

The interesting new part is the semantics of $\tilde\eta\thru\gamma$, which
informally specifies that all communications in $\tilde\eta$
occur through connector $\gamma$.
More precisely, in $\gencom\thru\gamma$, process $\pid p$ (the \emph{sender}) evaluates expression
$e$ and offers the resulting value to connector $\gamma$.
The connector eventually accepts and passes it to process $\pid q$ (the \emph{receiver}), who
stores it in its local variable $x$.
The behaviour of $\gensel\thru\gamma$ is similar, except that $\pid p$ offers a label instead of
a normal value.
Label selections do not change the state of the receiving process; their role in synthesizing
process implementations is discussed in \S~\ref{sec:processes}.

The boxed terminals in Fig.~\ref{fig:cc_syntax} are \emph{runtime terms}, meant to be used only in defining the formal semantics and not by programmers.
They arise because connectors may have a multi-step semantics (they do not necessarily synchronise sends with receives).
In particular, $\genrecv$ is obtained when a communication $\gencom$ is partially
executed, and $\pid p$ has already sent its value, but $\pid q$ has not yet received
it; $\genchoice$ arises similarly -- see Ex.~\ref{ex:motivating+reduc+inform} below.

For the semantics of $\tilde\eta\thru\gamma$ to be well-defined, $\tilde\eta$ must satisfy two conditions.
\label{cond:chor}
First, all interactions in $\tilde\eta$ must have \emph{distinct receivers}: if $\com{\pid p_1\carr{e_1}}{\pid q_1.x_1} \in \tilde\eta$ and $\com{\pid p_2\carr{e_2}}{\pid q_2.x_2} \in \tilde\eta$, then $\pid q_1 \neq \pid q_2$.
This ensures that the value received by a receiver is uniquely defined.
Second, all sends must be \emph{consistent}: if a process $\pid p$ is involved in
multiple interactions in the same set, then they are either all communications of the same expression or all selections of the same label.

\begin{example}[Book sale]
\label{ex:book-sale-conditions}
	In the context of our running example, the following interaction sets are allowed ($\yes$) or disallowed ($\no$) by our conditions for distinct receivers and consistent sends.

	\begin{itemize}
		\item[$\yes$]
		$\{ \com{\pid a\langle\mathit{money}\rangle}{\pid b}, \com{\pid c\langle\mathit{book}\rangle}{\pid s} \}$ -- Alice sends money to the bank, while Carol sends a book to the shipper (distinct receivers; consistent sends).

		\item[$\no$]
		$\{ \com{\pid a\langle\mathit{money}\rangle}{\pid b},
\com{\pid c\langle\mathit{money}\rangle}{\pid b} \}$
-- Both Alice and Carol send money to the bank, but the
bank can receive only from one sender at a time (receivers are not distinct).

		\item[$\no$]
		$\{ \com{\pid c\langle\mathit{price}\rangle}{\pid a},
\com{\pid c\langle\mathit{book}\rangle}{\pid s} \}$ -- Carol sends both the price to Alice and the
book to the shipper, but Carol can send only one value at a time (sends are not consistent).

		\item[$\yes$]
		$\{ \com{\pid a}{\pid b[\mathit{ok}]}, \com{\pid a}{\pid c[\mathit{ok}]}, \com{\pid a}{\pid s[\mathit{ok}]} \}$ -- Alice sends label $\mathit{ok}$ to Carol, the bank, and the shipper (distinct receivers; consistent sends).
		\qed
	\end{itemize}
\end{example}

\begin{example}[Book sale]\label{ex:motivating+reduc+inform}
	Returning to our running example, suppose that Alice wants to buy the book titled Foo.

	Let $\conn$ denote the connector mapping in Ex.~\ref{ex:motivating+conn}, let $C'$
        denote lines 2--3 in Ex.~\ref{ex:motivating+cr}, let $\sigma_0$ denote the initial
        choreography state function such that $\sigma_0(\pid a.\mathit{title}) = \code{"foo"}$, let
        $\sigma_0' = \sigma_0 [\pid c.\mathit{title} \mapsto \code{"foo"}]$, and let
        $\astate_0 = \{\med{a2c} \mapsto \langle 1, \{m \mapsto \bot\} \rangle \} \cup \astate_0^\text{rest}$
        denote the initial automaton state function,
	where:
	\begin{equation*}
		\astate_0^\text{rest} = \{
			\med{c2a} \mapsto \langle 1, \{m \mapsto \bot\} \rangle,
			\med{a2cbs} \mapsto \langle 1, \emptyset \rangle,
			\med{ac2bs} \mapsto \langle 1, \emptyset \rangle
		\}
	\end{equation*}
	Initially, thus, all connectors are in their initial state (state $1$).
	Furthermore, connectors $\med{a2c}$ and $\med{c2a}$ have an empty memory cell ($m \mapsto \bot$); connectors $\med{a2cbs}$ and $\med{ac2bs}$ have no memory cells (memory snapshot $\emptyset$).
	By rule \rname{C}{Com} (presented after this example), the choreography in Ex.~\ref{ex:motivating+cr} reduces under $\conn$ as follows:
	\setlength{\belowdisplayskip}{0pt}
	\begin{alignat*}{4}
		{} &
	&	\quad \com{\pid a\langle\mathit{title}\rangle}{\pid c}\thru\med{a2c};C', &
	&	&\quad \sigma_0,
	&	&\quad \{ \med{a2c} \mapsto \langle 1, \{m \mapsto \bot\} \rangle \} \cup \astate_0^\text{rest}
	\\	{\rightsquigarrow_{\conn}} \text{\rlap{\quad \fbox{\begin{minipage}[t]{.85\linewidth}
			\footnotesize
			In state $1$, according to $\conn$, connector $\med{a2c}$ only has a
                     transition that allows Alice to send (asynchronously) to Carol.
			By performing such a send, Alice enables the choreography to make a reduction, in which the first half of the communication completes; the (asynchronous) receive remains.
			In the same step, $\med{a2c}$ moves to state $2$, and the value sent by Alice is stored in $\med{a2c}$'s internal memory cell ($\mathit{title}$ evaluates to $\code{"foo"}$, based on $\sigma_0$).
		\end{minipage}}}}
	\\	{} &
	&	\quad \pid c.\mathit{title}?\code{"foo"}\thru\med{a2c};C', &
	&	&\quad \sigma_0,
	&	&\quad \{ \med{a2c} \mapsto \langle 2, \{m \mapsto \code{"foo"}\} \rangle \} \cup \astate_0^\text{rest}
	\\	{\rightsquigarrow_{\conn}} \text{\rlap{\quad\fbox{\begin{minipage}[t]{.85\linewidth}
			\footnotesize
			In state $2$, according to $\conn$, connector $\med{a2c}$ only has a
                        transition that allows Carol to receive (asynchronously) from Alice.
			By performing such a receive, Carol enables the choreography to make a reduction in which the whole communication completes.
			In the same step, $\med{a2c}$ moves to state $1$ (the internal memory cell is not cleared).
		\end{minipage}}}}
	\\	{} &
	&	\quad \emptyset\thru\med{a2c};C', &
	&	&\quad \sigma_0',
	&	&\quad \{ \med{a2c} \mapsto \langle 1, \{m \mapsto \code{"foo"} \} \rangle \} \cup \astate_0^\text{rest}
	\end{alignat*}
	\qed
\end{example}

\subsection{Formal semantics}
The semantics of CR is a reduction semantics parametrised over a \emph{connector mapping}: a function
$\conn$ from connector names to automata.
Intuitively, $\conn(\gamma)$ denotes the automaton that models connector $\gamma$ used in the
choreography; the set $P$ of ports in each $\conn(\gamma)$ is simply a one-to-one mapping (i.e., a renaming of) the set of processes that use the connector.\footnote{This means that each process can interact at
most through one port in each automaton.}

\begin{example}[Book sale]
\label{ex:motivating+conn}
	The connector names in Ex.~\ref{ex:motivating+cr} are $\med{a2c}$, $\med{c2a}$, $\med{a2cbs}$, and $\med{ac2bs}$.
	Thus, the requirements in Ex.~\ref{ex:motivating+reqs} give rise to the following connector mapping:
	\begin{equation*}
		\left\{ \!\begin{aligned}
			\med{a2c} & \mapsto \reo{Async1}[\pid a / p_1, \pid c / p_2] , &
			\med{c2a} & \mapsto \reo{Async1}[\pid c / p_1, \pid a / p_2] ,
		\\	\med{a2cbs} & \mapsto \reo{SyncMulti3}[\pid a / p_1, \pid b / p_2, \pid c / p_3, \pid s / p_4] , &
		\med{ac2bs} & \mapsto \reo{Barrier}[\pid a / p_1, \pid b / p_2, \pid c / p_3, \pid s / p_4]
		\end{aligned} \right\}
	\end{equation*}
	where $\reo{Async1} [\pid a / p_1, \pid c / p_2]$ denotes automaton $\reo{Async1}$ in
Fig.~\ref{fig:aut}, with $\pid a$ substituted for $p_1$, and $\pid c$ for $p_2$ (and likewise in the other mappings).
	Under this connector mapping, thus, Alice and Carol communicate via asynchronous channels ($\med{a2c}$ and $\med{c2a}$) to exchange title and price; Alice, Carol, the bank, and the shipper communicate via synchronous multi-cast ($\med{a2cbs}$) to exchange $\mathit{ok}$ or $\mathit{ko}$, and via barrier sends\slash receives ($\med{ac2bs}$) to exchange money and book.
	\qed
\end{example}

\begin{remark}
	The book sale scenario illustrates an important design decision, namely
	the separation between \emph{intention} and \emph{realisation}: a choreography
	defines \emph{what} interactions are intended (e.g., communications of the
	money from Alice to the Bank and the book from Carol to the shipper), while the
	connectors define \emph{how} these communications are realised (e.g.,
	synchronously or asynchronously). As a result, every interaction has to be
	expressed in two places, serving two complementary purposes: as
	``specifications'' in the choreography and as ``implementations'' in the
	connectors (automaton transitions). As usual, implementations should respect
	specifications; we address this in \S~\ref{sec:chor:deadlock}.
	\qed
\end{remark}

The reduction relation for CR under a given $\mathcal G$ is denoted as $\redM$; it ranges
over triples $C,\sigma,\astate$, where $C$ is a choreography, $\sigma$ is a choreography state function
(mapping each process to a mapping of its variables to values, i.e., $\sigma(\pid p.x)$ is the value stored at variable
$x$ in process $\pid p$), and $\astate$ is an automaton state function (mapping each connector name $\gamma$
in the domain of $\conn$ to a pair $\langle s,\mu\rangle$ of the state and memory snapshot of the
automaton~$\conn(\gamma)$).
Before introducing the formal rule for communications, we give an example that discusses the intuition.

\begin{example}[Book sale]\label{ex:motivating+reduc+inform}
	Returning to our running example, suppose that Alice wants to buy the book titled Foo.

	Let $\conn$ denote the connector mapping in Ex.~\ref{ex:motivating+conn}, let $C'$
        denote lines 2--3 in Ex.~\ref{ex:motivating+cr}, let $\sigma_0$ denote the initial
        choreography state function such that $\sigma_0(\pid a.\mathit{title}) = \code{"foo"}$, let
        $\sigma_0' = \sigma_0 [\pid c.\mathit{title} \mapsto \code{"foo"}]$, and let
        $\astate_0 = \{\med{a2c} \mapsto \langle 1, \{m \mapsto \bot\} \rangle \} \cup \astate_0^\text{rest}$
        denote the initial automaton state function,
	where:
	\begin{equation*}
		\astate_0^\text{rest} = \{
			\med{c2a} \mapsto \langle 1, \{m \mapsto \bot\} \rangle,
			\med{a2cbs} \mapsto \langle 1, \emptyset \rangle,
			\med{ac2bs} \mapsto \langle 1, \emptyset \rangle
		\}
	\end{equation*}
	Initially, thus, all connectors are in their initial state (state $1$).
	Furthermore, connectors $\med{a2c}$ and $\med{c2a}$ have an empty memory cell ($m \mapsto \bot$); connectors $\med{a2cbs}$ and $\med{ac2bs}$ have no memory cells (memory snapshot $\emptyset$).
	By rule \rname{C}{Com} (presented after this example), the choreography in Ex.~\ref{ex:motivating+cr} reduces under $\conn$ as follows:
	\setlength{\belowdisplayskip}{0pt}
	\begin{alignat*}{4}
		{} &
	&	\quad \com{\pid a\langle\mathit{title}\rangle}{\pid c}\thru\med{a2c};C', &
	&	&\quad \sigma_0,
	&	&\quad \{ \med{a2c} \mapsto \langle 1, \{m \mapsto \bot\} \rangle \} \cup \astate_0^\text{rest}
	\\	{\rightsquigarrow_{\conn}} \text{\rlap{\quad \fbox{\begin{minipage}[t]{.85\linewidth}
			\footnotesize
			In state $1$, according to $\conn$, connector $\med{a2c}$ only has a
                     transition that allows Alice to send (asynchronously) to Carol.
			By performing such a send, Alice enables the choreography to make a reduction, in which the first half of the communication completes; the (asynchronous) receive remains.
			In the same step, $\med{a2c}$ moves to state $2$, and the value sent by Alice is stored in $\med{a2c}$'s internal memory cell ($\mathit{title}$ evaluates to $\code{"foo"}$, based on $\sigma_0$).
		\end{minipage}}}}
	\\	{} &
	&	\quad \pid c.\mathit{title}?\code{"foo"}\thru\med{a2c};C', &
	&	&\quad \sigma_0,
	&	&\quad \{ \med{a2c} \mapsto \langle 2, \{m \mapsto \code{"foo"}\} \rangle \} \cup \astate_0^\text{rest}
	\\	{\rightsquigarrow_{\conn}} \text{\rlap{\quad\fbox{\begin{minipage}[t]{.85\linewidth}
			\footnotesize
			In state $2$, according to $\conn$, connector $\med{a2c}$ only has a
                        transition that allows Carol to receive (asynchronously) from Alice.
			By performing such a receive, Carol enables the choreography to make a reduction in which the whole communication completes.
			In the same step, $\med{a2c}$ moves to state $1$ (the internal memory cell is not cleared).
		\end{minipage}}}}
	\\	{} &
	&	\quad \emptyset\thru\med{a2c};C', &
	&	&\quad \sigma_0',
	&	&\quad \{ \med{a2c} \mapsto \langle 1, \{m \mapsto \code{"foo"} \} \rangle \} \cup \astate_0^\text{rest}
	\end{alignat*}
	\qed
\end{example}

These intuitions are captured in the rule for communications \rname{C}{Com}, which is the key rule defining $\redM$.
We discuss this rule in detail.

\[
\infer[\rname{C}{Com}]
  {\tilde\eta\thru\gamma;C,\sigma,\astate \redM \tilde\eta'\thru\gamma;C,\sigma',\astate[\gamma\mapsto\langle s',\mu'\rangle]}
  {
    \astate(\gamma)=\langle s,\mu\rangle
    &
    \tilde\eta,\sigma,\mu \tr[\phi] \tilde\eta',\sigma',\mu'
    &
    s \tr[\phi]_\gamma s'
 }
\]

Rule $\rname{C}{Com}$ allows some of the communications in $\tilde\eta$ to
reduce as long as the state of the automaton corresponding
to connector $\gamma$ allows it.
The rule reads: under $\conn$, triple $\tilde\eta\thru\gamma;C, \sigma, \astate$ can reduce if automaton
$\conn(\gamma)$ can fire a transition out of its current state that is compatible with the interactions specified in
$\tilde\eta$.

More formally, the first premise of this rule retrieves the current state $s$ and memory snapshot $\mu$ of the connector $\gamma$
controlling the communication. In the second premise, the labelled reduction
$\tilde\eta,\sigma,\mu\tr[\phi]\tilde\eta',\sigma',\mu'$ (defined below) states that reducing $\til\eta$ to $\til\eta'$ transforms
the state of processes $\sigma$ into $\sigma'$ and the memory snapshot of the connector $\mu$ into $\mu'$.
The label $\phi$ represents the actions executed in this reduction. The third premise checks that these actions are
allowed by the automaton, by checking that $\phi$ labels an outgoing transition of $s$.

\begin{figure}[t]
\small
\begin{eqnarray*}
  &
  \infer[\rname{C}{SendVal}]
        {\{\gencom\},\sigma,\mu\tr[\pid p\flow m]\genrecv,\sigma,\mu[m\mapsto v]\rule[1em]{0em}{0em}}
        {e\eval\sigma{\pid p}v}
  \quad
  \infer[\rname{C}{Mem}]
        {\emptyset,\sigma,\mu\tr[m_1\flow m_2]\emptyset,
        \sigma,\mu[m_2 \mapsto v]\rule[1em]{0em}{0em}}
        {\mu(m_1)=v}
  \\[1ex]
  &
  \infer[\rname{C}{RecvVal}]
        {\{\genrecv\},\sigma,\mu\tr[m\flow\pid q]\emptyset,\sigma[\pid q.x\mapsto v],\mu\rule[1em]{0em}{0em}}
        {\mu(m)=v}
  \quad
  \infer[\rname{C}{Mon}]
        {(\tilde\eta_1\uplus\tilde\eta'),\sigma,\mu\tr[\phi](\tilde\eta_2\uplus\tilde\eta'),\sigma',\mu'\rule[1em]{0em}{0em}}
        {\tilde\eta_1,\sigma,\mu\tr[\phi]\tilde\eta_2,\sigma',\mu'}
  \\[1ex]
	&
	\infer[\rname{C}{SyncVal}]
        {\{\gencom\},\sigma,\mu\tr[\pid p\flow\pid q]\emptyset,\sigma[\pid q.x\mapsto v],\mu\rule[1em]{0em}{0em}}
        {e\eval\sigma{\pid p}v}
		\quad
	\infer[\rname{C}{SyncSel}]
        {\{\gensel\},\sigma,\mu\tr[\pid p\flow\pid q]\emptyset,\sigma,\mu\rule[1em]{0em}{0em}}
        {}
  \\[1ex]
  &
  \infer[\rname{C}{SendSel}]
        {\{\gensel\},\sigma,\mu\tr[\pid p\flow m]\{\genchoice\},\sigma,\mu[m\mapsto\ell]\rule[1em]{0em}{0em}}
        {}
	\quad
	\infer[\rname{C}{RecvSel}]
        {\{\genchoice\},\sigma,\mu\tr[m\flow\pid q]\emptyset,\sigma,\mu\rule[1em]{0em}{0em}}
        {\mu(m)=\ell}
  \\[1ex]
  &
  \infer[\rname{C}{Join}]
        {(\tilde\eta_1\uplus\tilde\eta_2),\sigma,\mu\tr[\phi_1\cup\phi_2](\tilde\eta_1'\uplus\tilde\eta_2'),\sigma'',\mu''\rule[1em]{0em}{0em}}
        {
          \tilde\eta_1,\sigma,\mu\tr[\phi_1]\tilde\eta_1',\sigma',\mu'
          &
          \tilde\eta_2,\sigma',\mu'\tr[\phi_2]\tilde\eta_2',\sigma'',\mu''
          &
          (\dagger)
        }
\end{eqnarray*}

\caption{Semantics of individual communications. The side condition $(\dagger)$ in rule
\rname{C}{Join} reads: if memory cell $m$ occurs in both $\phi_1$ and $\phi_2$, then it is not
both written to in $\phi_1$ and read from in $\phi_2$.}
\label{fig:compat}
\end{figure}
The rules defining labelled reductions $\tilde\eta,\sigma,\mu\tr[\phi]\tilde\eta',\sigma',\mu'$ are given in
Fig.~\ref{fig:compat}.
They are obtained by considering the different possible cases for terms $\til\eta$ and matching them to appropriate constraints
$\phi$.
Let $e\eval\sigma{\pid p}v$ denote that expression $e$ evaluates to value $v$ under $\sigma$ (i.e., after substituting every free variable $x$ in $e$ by $\sigma(\pid p.x)$).
In rule \rname{C}{SyncVal}, an entire communication $\gencom$ is executed in one step.
Accordingly, the label $\pid p\flow \pid q$ denotes that the automaton should support a synchronous communication between
$\pid p$ and $\pid q$.
The state of the receiver $\pid q$ is updated with the value sent by $\pid p$.
Rule~\rname{C}{SendVal} applies in the case where the message from $\pid p$ should be stored in a memory cell $m$ of the automaton
(label $\pid p\flow m$).
This is used for asynchronous communications, where the message is received later on by the receiver.
In the reductum, we keep a runtime term signalling that the receiver is still waiting to receive the message ($\genrecv$).
This kind of runtime terms is handled by \rname{C}{RecvVal}, whose label specifies that $\pid q$ should receive the
message stored in some memory cell $m$.
Its premise checks that the value $\pid q$ is expecting to receive ($v$, defined in the choreography term)
is the one stored in $m$.%
\footnote{In other words, the automaton delivers the messages as specified in the choreography.}
Rules \rname{C}{SyncSel}, \rname{C}{SendSel} and \rname{C}{RecvSel} deal with
label selections similarly.
Rule \rname{C}{Mem} covers internal transitions in the automaton that only modify memory.
Finally, rules \rname{C}{Mon} and \rname{C}{Join} extend this notion to transitions labelled by non-singleton sets. Rule
\rname{C}{Mon} states that some communications in $\til\eta$ may not be executed at all (i.e., they are postponed until a later reduction). Rule \rname{C}{Join} allows
executing several communications at the same time; note that labels $\phi_1$ and $\phi_2$ may share constraints
(e.g., with multi-cast), while $\til\eta_1$ and $\til\eta_2$ must be disjoint ($\uplus$ is the
disjoint union operator).

The syntactic assumptions on choreography terms (page \pageref{cond:chor}), namely distinct receivers (all terms on the right-hand sides of interactions are
distinct) and consistent sends (if a process sends to several processes, then it always sends
the same value or label), ensure that the sequentialisation in \rname{C}{Join} is of no
consequence (building the final set $\phi$ in any order always yields the same final states
$\sigma'$ and $\mu'$ in \rname{C}{Com}), except if the same memory cell is both written to and read from.
In that case, the read must precede the write; this is guaranteed by side condition $(\dagger)$, which ensures that concurrent accesses to a memory cell are done in a consistent way.

\begin{example}[Book sale]\label{ex:motivating+reduc+a2c}
	We formally derive the reductions in Ex.~\ref{ex:motivating+reduc+inform}, using rule \rname{C}{Com} and Fig.~\ref{fig:compat}.
	Let $\conn$, $\sigma_0$, $\sigma_0'$, and $\astate_0$ be defined as in Ex.~\ref{ex:motivating+reduc+inform}.

	For the first reduction, first, $\astate_0(\med{a2c}) = \langle 1, \{m \mapsto \bot\} \rangle$ (rule \rname{C}{Com}, first premise).
	Next, automaton $\conn(\med{a2c})$ has one transition out of state $1$, namely $1 \tr[\pid a \flow m]_{\med{a2c}} 2$ (rule \rname{C}{Com}, third premise).
	Finally, using Fig.~\ref{fig:compat}, we need to derive $\com{\pid a\langle\mathit{title}\rangle}{\pid c}, \sigma_0, \{m \mapsto \bot\} \tr[\pid a \flow m] \ldots$ (rule \rname{C}{Com}, second premise); this follows from rule \rname{C}{SendVal}.
	Thus, we derive:
	\begin{equation*}
		\dfrac{
			\raisebox{-7\jot}{$\astate_0(\med{a2c}) = \langle 1, \{m \mapsto \bot\} \rangle$} \qquad \dfrac{
				\mathit{title}\eval{\sigma_0}{\pid a} \code{"foo"}
			}{\begin{array}{@{}c@{}}
				\com{\pid a\langle\mathit{title}\rangle}{\pid c}, \sigma_0, \{m \mapsto \bot\} \tr[\pid a \flow m] {}
			\\	\pid c.\mathit{title}?\code{"foo"}, \sigma_0, \{m \mapsto \code{"foo"}\}
			\end{array}} \raisebox{.5pt}{\rlap{\scriptsize\rname{C}{SendVal}}} \qquad \raisebox{-7\jot}{$1 \tr[\pid a \flow m]_{\med{a2c}} 2$}
		}{
			\begin{array}{@{} r @{\enspace} r @{\enspace} l @{\enspace} r @{}}
				& \com{\pid a\langle\mathit{title}\rangle}{\pid c}\thru\med{a2c};C', & \sigma_0, & \{ \med{a2c} \mapsto \langle 1, \{m \mapsto \bot\} \rangle \} \cup \astate_0^\text{rest}
			\\	\rightsquigarrow_\conn & \pid c.\mathit{title}?\code{"foo"}\thru\med{a2c};C', & \sigma_0, & \{ \med{a2c} \mapsto \langle 2, \{m \mapsto \code{"foo"}\} \rangle \} \cup \astate_0^\text{rest}
			\end{array}
		}
	\end{equation*}
	Henceforth, let $\astate_0' = \{ \med{a2c} \mapsto \langle 2, \{m \mapsto \code{"foo"}\} \rangle \} \cup \astate_0^\text{rest}$.

	For the second reduction, similarly, we derive:
	\setlength{\belowdisplayskip}{0pt}
	\begin{equation*}
		\dfrac{
			\raisebox{-7\jot}{$\astate_0'(\med{a2c}) = \langle 2, \{m \mapsto \code{"foo"}\} \rangle$} \qquad \dfrac{
				\{m \mapsto \code{"foo"}\}(m) = \code{"foo"}
			}{\begin{array}{@{}c@{}}
				\pid c.\mathit{title}?\code{"foo"}, \sigma_0, \{m \mapsto \code{"foo"}\}
			\\	{} \tr[m \flow \pid c] \emptyset, \sigma_0', \{ m \mapsto \code{"foo"} \}
			\end{array}} \raisebox{.5pt}{\rlap{\scriptsize\rname{C}{RecvVal}}} \qquad \raisebox{-7\jot}{$2 \tr[m \flow \pid c]_{\med{a2c}} 1$}
		}{
			\begin{array}{@{} r @{\enspace} r @{\enspace} l @{\enspace} r @{}}
				& \pid c.\mathit{title}?\code{"foo"}\thru\med{a2c};C', & \sigma_0, & \{ \med{a2c} \mapsto \langle 2, \{m \mapsto \code{"foo"}\} \rangle \} \cup \astate_0^\text{rest}
			\\	\rightsquigarrow_\conn & \emptyset\thru\med{a2c};C', & \sigma_0', & \{ \med{a2c} \mapsto \langle 1, \{m \mapsto \code{"foo"}\} \rangle \} \cup \astate_0^\text{rest}
			\end{array}
		}
	\end{equation*}
	\qed
\end{example}

\begin{example}[Book sale]\label{ex:motivating+reduc+ac2bs}
	We formally derive the reduction of line 4 in Ex.~\ref{ex:motivating+cr}.
	Let $\conn$ denote the connector mapping in Ex.~\ref{ex:motivating+conn}, let $\sigma$ denote a choreography state function such that $\sigma(\pid a.\mathit{money}) = \code{\$10}$ and $\sigma(\pid c.\mathit{book}) = \code{foo.pdf}$, let $\sigma' = \sigma [\pid b.\mathit{money} \mapsto \code{\$10}]$, let $\sigma'' = \sigma' [\pid s.\mathit{book} \mapsto \code{foo.pdf}]$, and let $\astate = \astate^\text{rest} \cup \{ \med{ac2bs} \mapsto \{1, \emptyset\} \}$ denote an automaton state function, for some $\astate^\text{rest}$.
	We derive:
	\setlength{\belowdisplayskip}{0pt}
	\begin{equation*}
		\dfrac{
			\raisebox{-7\jot}{$\astate(\med{ac2bs}) = \langle 1, \emptyset \rangle$} \quad \dfrac{
				\dfrac{
					\mathit{money}\eval\sigma{\pid a}\code{\$10}
				}{
					\begin{array}{@{}c@{}}
						\{ \com{\pid a\langle\mathit{money}\rangle}{\pid b}\}, \sigma, \emptyset
					\\	{} \tr[\pid a \flow \pid b] \emptyset, \sigma', \emptyset
					\end{array}
				} \qquad \dfrac{
					\mathit{book}\eval\sigma{\pid c}\code{foo.pdf}
				}{
					\begin{array}{@{}c@{}}
						\{ \com{\pid c\langle\mathit{book}\rangle}{\pid s} \}, \sigma', \emptyset
					\\	{} \tr[\pid c \flow \pid s] \emptyset, \sigma'', \emptyset
					\end{array}
				} \raisebox{.5pt}{\rlap{\scriptsize\rname{C}{SyncVal}}}
			}{
				\begin{array}{@{}c@{}}
					\{ \com{\pid a\langle\mathit{money}\rangle}{\pid b}, \com{\pid c\langle\mathit{book}\rangle}{\pid s} \}, \sigma, \emptyset
				\\	{} \tr[\pid a \flow \pid b \wedge \pid c \flow \pid s] \emptyset, \sigma'', \emptyset
				\end{array}
			} \raisebox{.5pt}{\rlap{\scriptsize\rname{C}{Join}}} \qquad \raisebox{-7\jot}{$1 \tr[\pid a \flow \pid b \wedge \pid c \flow \pid s]_{\med{ac2bs}} 1$}
		}{
			\begin{array}{@{}r@{\enspace}r@{\enspace}l@{\enspace}r@{}}
				& \{ \com{\pid a\langle\mathit{money}\rangle}{\pid b}, \com{\pid c\langle\mathit{book}\rangle}{\pid s} \}\thru\med{ac2bs};\nil, & \sigma, & \astate
			\\	\rightsquigarrow_\conn & \emptyset\thru\med{ac2bs};\nil, & \sigma'', & \astate
			\end{array}
		}
	\end{equation*}
	\qed
\end{example}

Rule \rname{C}{Com} is the only rule in the semantics that can cause a choreography to get stuck; we
discuss this in more detail in the next section.

\begin{figure}[t]
\small
\begin{eqnarray*}
&
\infer[\rname{C}{Eta-Eta}]
{
	\left( \tilde\eta\thru\gamma;\wtil{\eta'}\thru\gamma' \right) \equiv \left( \wtil{\eta'}\thru\gamma';\tilde\eta\thru\gamma \right)
}
{
	\pn(\tilde\eta) \cap \pn(\wtil{\eta'}) = \emptyset
        &
        \gamma\neq\gamma'
}
\\[1ex]
&
\infer[\rname{C}{Eta-Split}]
{
  \left( \wtil{\eta_1}\thru\gamma;\wtil{\eta_2}\thru\gamma \right) \equiv \left(\wtil{\eta_1}\cup\wtil{\eta_2}\right)\thru\gamma
}
{
  \pn(\wtil{\eta_1})\cap\pn(\wtil{\eta_2}) = \emptyset
}
\\[1ex]
&
\infer[\rname{C}{Eta-Cond}]
{
	\left( \cond{\pid p.e}{(\tilde\eta \thru \gamma;C_1)}{(\tilde\eta \thru \gamma;C_2)} \right)
	\equiv
	\left( \tilde\eta \thru \gamma;\gencond \right)
}
{
	\pid p \not\in \pn(\tilde\eta)
}
\\[1ex]
&
\infer[\rname{C}{Eta-Rec}]
{
	\left( \rec{X}{C_2}{(\tilde\eta \thru \gamma;C_1)} \right)
	\equiv
	\left( \tilde\eta \thru \gamma;\genrec \right)
}
{
	\pn(C_i) \cap \pn(\tilde\eta) = \emptyset
}
\\[1ex]
&\infer[\rname{C}{Cond-Cond}]
{
	\begin{array}{c}
	\cond{\pid p.e}{
 		\left(\cond{\pid q.e'}
 		{C_1}{C_2}\right)
	}{
 		\left(\cond{\pid q.e'}
 		{C'_1}{C'_2}\right)
	}
	\\
	\equiv
	\\
	\cond{\pid q.e'}{
		\left(\cond{\pid p.e}
		{C_1}{C'_1}\right)
	}{
		\left(\cond{\pid p.e}
		{C_2}{C'_2}\right)
	}
	\end{array}
}
{
	\pid p \neq \pid q 
}
\\[1ex]
&\infer[\rname{C}{Unfold}]
{
	\left( \rec{X}{C_2}{C_1[\gencall]} \right)
	\precongr
	\left( \rec{X}{C_2}{C_1[C_2]} \right)
}
{}
\\[1ex]
&
\infer[\rname{C}{EtaEnd}]
{
  \emptyset\thru\gamma;C \precongr C
}
{}
\qquad
\infer[\rname{C}{ProcEnd}]
{
	\left(\rec{X}{C}{\nil} \right) \precongr \nil
}
{}
\end{eqnarray*}
\caption{Cho-Reo-graphy, structural precongruence.}
\label{fig:mcr_precongr}
\end{figure}

The remaining rules defining $\redM$ are standard from other choreography calculi (e.g.,~\cite{CM16b}), and they are given Fig.~\ref{fig:mcr_semantics}):
they define reductions for conditionals, allow for reductions under procedure
definitions, and close the reduction relation $\redM$ under the structural
precongruence $\precongr$.
\begin{figure}[t]
\begin{eqnarray*}
&\infer[\rname{C}{Cond}]
{
	\gencond, \sigma, \astate \redM C_i, \sigma, \astate
}
{
	i = 1 \ \text{if } e\eval\sigma{\pid p}\mathit{true},\ 
	i = 2 \ \text{otherwise}
}
\\[1ex]
&\infer[\rname{C}{Struct}]
{
	C_1, \sigma, \astate \redM C'_1, \sigma', \astate'
}
{
	C_1 \precongr C_2
	& C_2, \sigma, \astate \redM C'_2, \sigma', \astate'
	& C'_2 \precongr C'_1
}
\\[1ex]
&
\infer[\rname{C}{Ctx}]
{
	\genrec, \sigma, \astate \redM
	\rec{X}{C_2}{C'_1}, \sigma', \astate'
}
{
	C_1, \sigma, \astate \redM C'_1, \sigma', \astate
}
\end{eqnarray*}
\caption{Cho-Reo-graphy, semantics.}
\label{fig:mcr_semantics}
\end{figure}
Rule~\rname{C}{Struct} uses a structural precongruence that allows for actions to be swapped if they do not interfere.
The most interesting rules deal with communications: \rname{C}{Eta-Split} and
\rname{C}{EtaEnd}, displayed in Fig.~\ref{fig:mcr_precongr} (where
$\pn(\tilde\eta)$ denotes the process names that occur in~$\tilde\eta$).
Rule~\rname{C}{Eta-Split} allows interactions through the same connector to be joined in one $\tilde \eta$ or split among
several ones, which is necessary for correct interaction with rule \rname{C}{Eta-Eta}.
(See~\cite{lcf:ksl:fm:17} for a similar discussion: this rule is needed whenever one $\tilde\eta$ can specify several communications. Ex.~\ref{ex:ksl17} below illustrates this interplay in our language.)
Rule \rname{C}{EtaEnd} removes completed interactions from the head of a choreography.
The remaining rules defining $\precongr$ concern recursion unfolding and swapping of conditionals; these rules are straightforward adaptations of those in \cite{CM16b}.

\begin{example}
\label{ex:ksl17}
  Consider the following choreography, where for simplicity we abstract from the actual values being communicated.
  \[C \equiv \{\com{\pid p}{\pid q},\com{\pid r}{\pid s}\}\thru\gamma;\{\com{\pid p}{\pid q},\com{\pid t}{\pid v}\}\thru\gamma'\]
  and assume that both $\conn(\gamma)$ and $\conn(\gamma')$ allow the interactions between the processes they connect to occur
  independently and in any order.
  Then it is actually possible that the communication between $\pid t$ and $\pid v$ is the first one to take place.
  In order for our choreography language to allow this behaviour, we need to use both \rname{C}{Eta-Split} and
  \rname{C}{Eta-Eta} to exchange actions, as follows.
  \setlength{\belowdisplayskip}{0pt}
  \begin{align*}
    C
    & \equiv
    \{\com{\pid p}{\pid q},\com{\pid r}{\pid s}\}\thru\gamma;\{\com{\pid p}{\pid q},\com{\pid t}{\pid v}\}\thru\gamma' \\
    & \equiv
    \{\com{\pid p}{\pid q},\com{\pid r}{\pid s}\}\thru\gamma;\com{\pid t}{\pid v}\thru\gamma';\com{\pid p}{\pid q}\thru\gamma'
    & \mbox{by} & \rname{C}{Eta-Split}\\
    & \equiv
    \com{\pid t}{\pid v}\thru\gamma';\{\com{\pid p}{\pid q},\com{\pid r}{\pid s}\}\thru\gamma;\com{\pid p}{\pid q}\thru\gamma'
    & \mbox{by} & \rname{C}{Eta-Eta}
  \end{align*}
  \qed
\end{example}

\subsection{Flexibility}

An immediate advantage of CR is that different communication semantics can freely be mixed in the same
choreography. A second advantage is that CR enables programmers to change the semantics of a choreography
modularly, by altering the behaviour of the connectors through which the processes interact with each other,
\emph{without} the need to change the choreography itself.

\begin{example}[Book sale]\label{ex:motivating+flex}
	The original book sale scenario (Ex.~\ref{ex:motivating} and~\ref{ex:motivating+reqs})
        requires Alice and Carol to send money and book to the bank and the shipper synchronously,
        as they initially do not trust each other.
	Now, suppose Alice and Carol establish mutual trust after successfully completing a number of book sales, such that their communications with the bank and the shipper no longer need to occur synchronously.
	Instead of redeveloping the choreography from scratch, we need to redefine only the connector mapping $\conn$ in Ex.~\ref{ex:motivating+conn}, as follows:
	\begin{equation*}
		\conn := \conn [\med{ac2bs} \mapsto \raisebox{.67ex}{\begin{tikzautom}
			\state{S1}{0,0}{\co2}
			\state{S2}{right of=S1}{1}
			\state{S3}{right of=S2}{2}
			\trans{S1}{S2}{from16,to9}{below}{\pid a \flow \pid b}
			\trans{S2}{S1}{from8,to1}{above}{\pid c \flow \pid s}
			\trans{S2}{S3}{from16,to9}{below}{\pid a \flow \pid b}
			\init{below of}{S2}
			\trans{S3}{S2}{from8,to1}{above}{\pid c \flow \pid s}
		\end{tikzautom}}]
	\end{equation*}
	Thus, we updated the mapping for $\med{ac2bs}$; for all other connectors, the mapping remains
the same as in Ex.~\ref{ex:motivating+conn}.
	The new automaton for $\med{ac2bs}$ allows either a communication between Alice and the bank, asynchronously followed by a communication between Carol and the shipper (via state $2$), or the same two communications in the reverse order (via state $\co2$).\footnote{%
		The communications between Alice and the bank, and between Carol and the shipper, are synchronous in this automaton.
		We can easily make those communications asynchronous as well, but we skip this here to save space (the automaton gets larger).
	}

	Redefining the connector mapping for $\med{ac2bs}$ is the only change we need to make: the choreography itself is \emph{exactly} the same as in Ex.~\ref{ex:motivating+cr}.
	This means that also the first reductions remain \emph{exactly} the same as in Ex.~\ref{ex:motivating+reduc+inform} and~\ref{ex:motivating+reduc+a2c}.
	By contrast, the reduction in Ex.~\ref{ex:motivating+reduc+ac2bs} is no longer valid, as it
	relies on the semantics of $\med{ac2bs}$.
	To show the difference formally, let $\sigma$, $\sigma'$, $\sigma''$, and $\astate$ be defined as in Ex.~\ref{ex:motivating+reduc+ac2bs}.
	Let also $\astate' = \astate [\med{ac2bs} \mapsto \langle 2, \emptyset \rangle]$ and $\astate'' = \astate$.
	We derive:
	\begin{equation*}
		\dfrac{
			\raisebox{-7\jot}{$\astate(\med{ac2bs}) = \langle 1, \emptyset \rangle$} \qquad \dfrac{
				\dfrac{
					\mathit{money}\eval\sigma{\pid a}\code{\$10}
				}{
					\{ \com{\pid a\langle\mathit{money}\rangle}{\pid b}\}, \sigma, \emptyset \tr[\pid a \flow \pid b] \emptyset, \sigma', \emptyset
				} \raisebox{.5pt}{\rlap{\scriptsize\rname{C}{SyncVal}}}
			}{
				\begin{array}{@{}c@{}}
					\{ \com{\pid a\langle\mathit{money}\rangle}{\pid b}, \com{\pid c\langle\mathit{book}\rangle}{\pid s} \}, \sigma, \emptyset
				\\	{} \tr[\pid a \flow \pid b] \{ \com{\pid c\langle\mathit{book}\rangle}{\pid s} \}, \sigma', \emptyset
				\end{array}
			} \raisebox{.5pt}{\rlap{\scriptsize\rname{C}{Mon}}} \qquad \raisebox{-7\jot}{$1 \tr[\pid a \flow \pid b]_{\med{ac2bs}} 2$}
		}{
			\begin{array}{@{}r@{\enspace}r@{\enspace}l@{\enspace}l@{}}
				& \{ \com{\pid a\langle\mathit{money}\rangle}{\pid b}, \com{\pid c\langle\mathit{book}\rangle}{\pid s} \}\thru\med{ac2bs};\nil, & \sigma, & \astate
			\\	\rightsquigarrow_\conn & \{ \com{\pid c\langle\mathit{book}\rangle}{\pid s} \}\thru\med{ac2bs};\nil, & \sigma', & \astate'
			\end{array}
		}
	\end{equation*}
	Next, we derive:
	\begin{equation*}
		\dfrac{
			\raisebox{-2.25\jot}{$\astate(\med{ac2bs}) = \langle 2, \emptyset \rangle$} \quad \dfrac{
				\mathit{book}\eval\sigma{\pid c}\code{foo.pdf}
			}{
				\{ \com{\pid c\langle\mathit{book}\rangle}{\pid s} \}, \sigma', \emptyset \tr[\pid c \flow \pid s] \emptyset, \sigma'', \emptyset
			} \raisebox{.5pt}{{\scriptsize\rname{C}{SyncVal}}} \quad \raisebox{-2.25\jot}{$2 \tr[\pid c \flow \pid s]_{\med{ac2bs}} 1$}
		}{
			\begin{array}{@{}r@{\enspace}r@{\enspace}l@{\enspace}l@{}}
				& \{ \com{\pid c\langle\mathit{book}\rangle}{\pid s} \}\thru\med{ac2bs};\nil, & \sigma', & \astate'
			\\	\rightsquigarrow_\conn & \emptyset\thru\med{ac2bs};\nil, & \sigma'', & \astate''
			\end{array}
		}
	\end{equation*}
	Thus, as intended, our reduction rules let us derive two separate reductions with one communication each (first Alice and the bank, then Carol and the shipper) instead of one reduction with two communications (Ex.~\ref{ex:motivating+reduc+ac2bs}).
	Similarly, we can derive two separate reductions whereby Carol and the shipper communicate first, followed by Alice and the bank.
	\qed
\end{example}

\begin{example}
\label{ex:motivating+opposite}
	The previous example works also ``in the opposite direction'', from a
	trusting Alice and Carol (using connector mapping $\conn$ in
	Ex.~\ref{ex:motivating+flex}) to cautious ones (using connector mapping
	$\conn$ in Ex.~\ref{ex:motivating+conn}).

	Our choreography is also compatible with the case where we have a trusting Alice and
	a cautious Carol, who only sends the book after receiving payment.
	A connector mapping that implements this behaviour is the following.
	\begin{equation*}
		\conn := \conn [\med{ac2bs} \mapsto \raisebox{.67ex}{\begin{tikzautom}
			\state{S2}{right of=S1}{1}
			\state{S3}{right of=S2}{2}
			\trans{S2}{S3}{from16,to9}{below}{\pid a \flow \pid b}
			\init{below of}{S2}
			\trans{S3}{S2}{from8,to1}{above}{\pid c \flow \pid s}
		\end{tikzautom}}]
	\end{equation*}
	This connector mapping is still compatible with the choreography in
	Ex.~\ref{ex:motivating+cr}.
	The symmetric case where Carol is trusting and Alice is cautious is similar.
	\qed
\end{example}

\begin{example}
\label{ex:motivating+referee}
	Note that the changes to connector mappings in Ex.~\ref{ex:motivating+flex} and
	\ref{ex:motivating+opposite} would still be possible
	if the programmer had written, e.g.,
	\begin{equation*}
		...; \{ \com{\pid a\langle\mathit{money}\rangle}{\pid b} \}\thru\med{ac2bs};
		\{ \com{\pid c\langle\mathit{book}\rangle}{\pid s} \}\thru\med{ac2bs};\nil
	\end{equation*}
	instead of the choreography in Ex.~\ref{ex:motivating+cr}.
	Indeed, these two choreographies are equivalent due to the congruence rule
	$\rname{C}{Eta-Split}$, and thus the sets of connectors that are compatible with
	each of them are the same.

	This might be surprising at first, but it fits with the view of choreographies as
	global specifications of independent processes. Specifically in this case, no choreography
	can impose a causal dependency between the two communications
	$\com{\pid a\langle\mathit{money}\rangle}{\pid b}$ and
	$\com{\pid c\langle\mathit{book}\rangle}{\pid s}$ unless it includes an additional
	communication in the middle involving a process that can observe both.
	The lack of causal dependencies in this example thus leaves the connector for $\med{ac2bs}$
	free to decide the order in which the interactions are performed.
	\qed
\end{example}

\subsection{Deadlock-freedom}
\label{sec:chor:deadlock}

Rule \rname{C}{Com} is the only rule in the semantics that can cause a choreography to get stuck: in choreography $\tilde\eta\thru\gamma;C$, there can be
incompatibilities between the communications allowed by connector $\gamma$ and the intended communications in $\tilde\eta$, causing none of the communications in $\tilde\eta$ to be permitted by $\gamma$.
In this case, we say that $\gamma$ does not \emph{respect} the choreography.\footnote{A
choreography expresses the intentions of the programmer. Although she may instantiate
connectors however she likes, we assume they do not violate her intentions.}
Concretely, this can happen because the transitions available at the current state $s$ either require communications between
processes not involved in $\tilde\eta$ or because there is an incompatibility with messages in transit (the premises of rules
\rname{C}{RecvVal} and \rname{C}{RecvSel}).

In more detail, the first premise in rule \rname{C}{Com} always holds (assuming $\astate$ is defined for all connector names in $\tilde\eta\thru\gamma;C$).
This gives us unique bindings for $s$ and $\mu$.
The third premise in \rname{C}{Com} is also always true (assuming every state of a connector has at least one outgoing transition; this can trivially be checked).
For every outgoing transition of $s$, this gives us bindings for $\phi$ and $s'$.
Now, the choreography gets stuck if for each of those bindings, the second premise in \rname{C}{Com} is false.
This can happen in two cases: either $\tilde\eta,\sigma,\mu \tr[\varphi] \tilde\eta',\sigma',\mu'$ can be derived (using the rules in Fig.~\ref{fig:compat}) and $\phi \neq \varphi$ for every derivation, or $\tilde\eta,\sigma,\mu \tr[\varphi] \tilde\eta',\sigma',\mu'$ cannot be derived at all.
The former happens if every $\varphi$ contains different processes than $\phi$
(see Ex.~\ref{ex:motivating+resp1}), or the same processes but in different send\slash
receive pairs (see Ex.~\ref{ex:motivating+resp2}); the latter happens if $\tilde\eta$
contains only asynchronous receives for which rules \rname{C}{RecvVal} and \rname{C}{RecvSel} in
Fig.~\ref{fig:compat} are inapplicable (see Ex.~\ref{ex:motivating+resp3}).

\begin{example}[Book sale]\label{ex:motivating+resp1}
	Suppose we mistakenly redefine the connector mapping $\conn$ in Ex.~\ref{ex:motivating+conn}
	as follows (cf.\ Ex.~\ref{ex:motivating+flex}; i.e., the boxed label is wrong):
	\setlength{\fboxsep}{1.5pt}
	\begin{equation*}
		\conn := \conn [\med{ac2bs} \mapsto \raisebox{.67ex}{\begin{tikzautom}
			\state{S1}{0,0}{\co2}
			\state{S2}{right of=S1}{1}
			\state{S3}{right of=S2}{2}
			\trans{S1}{S2}{from16,to9}{below}{\pid a \flow \pid b}
			\trans{S2}{S1}{from8,to1}{above}{\pid c \flow \pid s}
			\trans{S2}{S3}{from16,to9}{below}{\pid a \flow \pid b}
			\init{below of}{S2}
			\trans{S3}{S2}{from8,to1}{above}{\smash{\boxed{\strut\pid a \flow \pid b}}}
		\end{tikzautom}}]
	\end{equation*}
	Thus, connector $\med{ac2bs}$ initially allows a communication either between Alice and the bank, or between Carol and the shipper.
	In the latter case, $\med{ac2bs}$ subsequently allows a communication between Alice and the bank, as in Ex.~\ref{ex:motivating+flex}.
	But in the former case, $\med{ac2bs}$ subsequently allows a second communication between Alice and the bank (instead of between Carol and the shipper).

	The first derivation in Ex.~\ref{ex:motivating+flex} is still valid, but the second derivation is not: rule \rname{C}{SyncVal} is still applied to derive $\{ \com{\pid c\langle\mathit{book}\rangle}{\pid s} \}, \sigma', \emptyset \tr[\pid c \flow \pid s] \emptyset, \sigma'', \emptyset$ to fulfill the second premise of rule \rname{C}{Com}, but $\med{ac2bs}$ has no transition in state $2$ labelled with $\pid c \flow \pid s$.
	As there are no other derivations to fulfill the second premise of rule \rname{C}{Com}, the choreography gets stuck.
	\qed
\end{example}

\begin{example}[Book sale]\label{ex:motivating+resp2}
	Suppose we mistakenly redefine connector mapping $\conn$ in Ex.~\ref{ex:motivating+conn} as follows (i.e., the boxed process names are wrong\slash swapped):
	\setlength{\fboxsep}{1.5pt}
	\begin{equation*}
		\conn := \conn [\med{ac2bs} \mapsto \reo{Barrier}[\pid a / p_1, \smash{\boxed{\strut\pid s}} / p_2, \pid c / p_3, \smash{\boxed{\strut\pid b}} / p_4]]
	\end{equation*}
	Formally, automaton $\conn(\med{ac2bs})$ has the following transition: $1 \tr[\pid a \flow \pid s \wedge \pid c \flow \pid b]_{\med{ac2bs}} 1$.
	Thus, connector $\med{ac2bs}$ allows communications between Alice and the shipper (instead of the bank), and between Carol and the bank (instead of the shipper).

	The derivation in Ex.~\ref{ex:motivating+reduc+ac2bs} is no longer valid: rules \rname{C}{SyncVal} and \rname{C}{Join} are still applied to derive $\{ \com{\pid a\langle\mathit{money}\rangle}{\pid b}, \com{\pid c\langle\mathit{book}\rangle}{\pid s} \}, \sigma, \emptyset \tr[\pid a \flow \pid b \wedge \pid c \flow \pid s] \emptyset, \sigma'', \emptyset$ to fulfill the second premise of rule \rname{C}{Com}, but $\med{ac2bs}$ has no transition labelled with $\pid a \flow \pid b \wedge \pid c \flow \pid s$.
	As there are no other derivations to fulfill the second premise of rule \rname{C}{Com}, the choreography gets stuck.
	\qed
\end{example}

\begin{example}[Book sale]\label{ex:motivating+resp3}
	Suppose we mistakenly redefine the connector mapping $\conn$ in Ex.~\ref{ex:motivating+conn} as follows (i.e., the boxed process names are wrong\slash swapped):
	\setlength{\fboxsep}{1.5pt}
	\begin{equation*}
		\conn := \conn [\med{ac2bs} \mapsto \raisebox{.67ex}{\begin{tikzautom}
			\state{S1}{0,0}{1}
			\state{S2}{right of=S1}{2}
			\state{S3}{right of=S2}{3}
			\state{S4}{right of=S3}{4}
			\init{below of}{S1}
			\trans{S1}{S2}{}{above}{\pid a \flow m_1}
			\trans{S2}{S3}{}{above}{\pid c \flow m_2}
			\trans{S3}{S4}{}{above}{m_1 \flow \smash{\boxed{\strut\pid s}}}
			\trans{S4}{S1}{out=-157.5, in=-22.5}{above}{m_2 \flow \smash{\boxed{\strut\pid b}}}
		\end{tikzautom}}]
	\end{equation*}
	Thus, connector $\med{ac2bs}$ allows an asynchronous send by Alice, followed by an asynchronous send by Carol, followed by an asynchronous receive by the shipper, followed by an asynchronous receive by the bank.
	However, the shipper receives the value sent by Alice (instead of Carol), while the bank receives the value sent by Carol (instead of Alice)

	Let $\sigma$, $\sigma'$, $\sigma''$, and $\astate$ be defined as in Ex.~\ref{ex:motivating+reduc+ac2bs}.
	Furthermore, let $\mu = \{m_1 \mapsto \bot, m_2 \mapsto \bot \}$, let $\mu' = \{m_1 \mapsto \code{\$10}, m_2 \mapsto \bot \}$, and let $\mu'' = \{m_1 \mapsto \code{\$10}, m_2 \mapsto \code{foo.pdf} \}$.
	The following reductions can be derived using rule \rname{C}{Com}:
	\begin{alignat*}{4}
		&
	&	\quad\{ \com{\pid a\langle\mathit{money}\rangle}{\pid b}, \com{\pid c\langle\mathit{book}\rangle}{\pid s} \}\thru\med{ac2bs};\nil, &
	&	&\quad \sigma,
	&	&\quad \astate [\med{ac2bs} \mapsto \langle 1, \mu \rangle]
	\\	{\rightsquigarrow_\conn} &
	&	\quad \{ \pid b.\mathit{money}?\code{\$10}, \com{\pid c\langle\mathit{book}\rangle}{\pid s} \}\thru\med{ac2bs};\nil, &
	&	&\quad \sigma,
	&	&\quad \astate [\med{ac2bs} \mapsto \langle 2, \mu' \rangle]
	\\	{\rightsquigarrow_\conn} &
	&	\quad \{ \pid b.\mathit{money}?\code{\$10}, \pid s.\mathit{book}?\code{foo.pdf} \}\thru\med{ac2bs};\nil, &
	&	&\quad \sigma,
	&	&\quad \astate [\med{ac2bs} \mapsto \langle 3, \mu'' \rangle]
	\end{alignat*}
	At this point, the choreography gets stuck: there are no derivations to fulfill the second premise of \rname{C}{Com}.
	To see this, note that only rule \rname{C}{RecvVal} may be applicable (together with rule \rname{C}{Join}), but $\mu''(m_2) = \code{foo.pdf}$, whereas the choreography states $\pid b.\mathit{money}?\code{\$10}$.
	In other words, the choreography expects the bank to receive \code{\$10}, but connector $\med{ac2bs}$ allows the bank only to receive \code{foo.pdf}, out of memory cell $m_2$.
	\qed
\end{example}

None of these situations can arise
in existing choreography models, where all (a)synchronous channels are guaranteed to respect their choreographies, because the choreography syntax is carefully tuned to the \emph{fixed} communication semantics of these channels.
In CR, we have no fixed communication semantics: the fact that connectors in CR may not respect their choreography is, thus, a consequence of the added expressiveness and flexibility CR provides.

We proceed with a more formal account.

\begin{definition}
  \label{defn:respects}
  Connector mapping $\conn$ in automaton state function $\astate$ \emph{respects} choreography $C$ if: for every
  $\sigma$, $\tilde\eta$, $\gamma$, $\sigma'$ and $\astate'$, if
  $C,\sigma,\astate\redM^\ast\tilde\eta\thru\gamma;C',\sigma',\astate'$, then there exist $\sigma''$ and $\astate''$ such
  that $\tilde\eta\thru\gamma;C',\sigma',\astate'\redM^\ast C',\sigma'',\astate''$.
  Connector mapping $\conn$ respects choreography $C$ if $\conn$ respects $C$ in initial automaton state function $\astate_0$ (which assigns each automaton to its initial state and memory snapshot, as specified in $\conn$).
\end{definition}


\begin{definition}
  $C, \sigma, \astate_0$ is deadlock-free for every $\sigma$ iff $\conn$ respects $C$.
\end{definition}

\noindent
We can show respectfulness/deadlock-freedom to be undecidable using a classical
recursion-theoretic argument.

\begin{theorem}[Undecidability of Deadlock-Freedom]
\label{thm:respect-undec}
In general, it is undecidable whether a connector mapping $\conn$ respects a choreography $C$.
\end{theorem}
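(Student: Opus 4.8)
The plan is to reduce the halting problem to the complement of respectfulness, using the fact that a connector lets us place a ``trap'' interaction that its controlling automaton can never fire. The conceptual point is that in the usual choreography calculi deadlock is impossible by construction, so the decision problem is trivial there; in CR a connector can make a \emph{reachable} interaction block uncompletable, and deciding whether such a block is ever reached is a reachability question about a Turing-powerful model. I would therefore fix a deterministic two-counter (Minsky) machine $M$, whose halting problem is undecidable, and give a computable map $M \mapsto (C_M,\conn_M)$ such that $\conn_M$ respects $C_M$ if and only if $M$ does not halt.

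For the encoding I would store the two counters $c_1,c_2\in\mathbb{N}$ and a program counter in the local variables of a single worker process $\pid p$; an auxiliary process $\pid q$ and two one-directional synchronous channels $\gamma_{pq}$ (carrying $\pid p\flow\pid q$) and $\gamma_{qp}$ (carrying $\pid q\flow\pid p$) are used only so that $\pid p$ can update its own state via a short ping-pong, which is necessary because a process's state only changes when it receives. (Two separate connectors are needed so that each port is always a sender or always a receiver, as the automaton assumptions require.) The finite control of $M$ is mirrored by a single recursive procedure that dispatches on the program counter with nested conditionals; each instruction (increment, decrement, zero-test-and-jump) is realised by a handful of these synchronous communications, all of which the corresponding channel fires in one step via \rname{C}{SyncVal}/\rname{C}{Com}, and each conditional reduces by \rname{C}{Cond}. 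The only delicate ingredient is the \textsc{halt} control state, whose branch I would compile to the trap $\com{\pid p\carr{0}}{\pid q}\thru\gamma_{\mathrm{bad}}$, where $\gamma_{\mathrm{bad}}$ connects $\pid p,\pid q$ but whose sole outgoing transition is labelled $\pid q\flow\pid p$. Then the third premise of \rname{C}{Com} always holds (there is an outgoing transition), the second premise forces the label $\pid p\flow\pid q$, and these never agree, so no rule in Fig.~\ref{fig:compat} can complete the block: reaching this configuration is an irrecoverable deadlock. Note that the only expression power used is successor, predecessor, and equality/zero tests, so the construction does not rely on an artificially strong value language; if one is willing to assume a Turing-complete expression language, a single process and a direct reduction from the halting problem work just as well.

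To finish, I would argue the two-way correspondence. If $M$ halts, the simulation faithfully reaches the \textsc{halt} branch, so the trap configuration is reachable and, by the argument above, uncompletable; hence $\conn_M$ does \emph{not} respect $C_M$. If $M$ does not halt, the trap is never reached, and every reachable configuration carrying a head interaction block is one of the synchronous simulation steps, each completable in a single step (after which the empty block is discarded by \rname{C}{EtaEnd}); hence $\conn_M$ respects $C_M$. Since the compilation is clearly computable and Minsky-machine halting is undecidable, respectfulness—and therefore, by the preceding definition, deadlock-freedom—is undecidable.

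The hard part will be the ``respects'' direction for the non-halting case, because respectfulness (Def.~\ref{defn:respects}) quantifies over \emph{all} configurations reachable by $\redM^\ast$, not just those on the main simulation trace. I would have to establish a global invariant: every reachable triple whose choreography has a head block $\tilde\eta\thru\gamma$ is either the completed form $\emptyset\thru\gamma;C'$ (handled by \rname{C}{EtaEnd}) or one of the designated synchronous steps, and none of the runtime configurations produced by partial communications or by out-of-order reordering through \rname{C}{Struct} can strand an uncompletable block. Pinning down this invariant—i.e.\ showing that the control flow of $C_M$ exactly mirrors the runs of $M$ so that ``$M$ reaches \textsc{halt}'' coincides with ``$C_M$ reaches the trap''—is where the real work lies; the undecidability itself is then immediate.
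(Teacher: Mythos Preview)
Your proposal is correct and follows a genuinely different route from the paper's own proof. The paper argues much more tersely: it assumes that arbitrary total computable functions $f$ can be implemented as expressions at a process, builds a choreography that iterates over $0,1,2,\ldots$, checks $f(x)=0$ at each step, and enters a trap action $\eta$ (which the connector cannot fire) as soon as $f$ returns a nonzero value; respectfulness then coincides with ``$f$ is constantly $0$'', which is undecidable by Rice's theorem. Your reduction from Minsky-machine halting is more explicit and, importantly, makes weaker demands on the expression language (only successor, predecessor, and zero-test), so it does not silently assume a Turing-complete value language---an assumption the paper leaves implicit. The cost is the extra bookkeeping you flag at the end: showing that the only reachable head blocks in the non-halting case are the benign synchronous ping-pongs. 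This is straightforward here because every interaction in $C_M$ involves both $\pid p$ and $\pid q$, so the swapping rules \rname{C}{Eta-Eta} and \rname{C}{Eta-Split} are blocked and the reduction is essentially deterministic; you should also add a short initialisation prefix that sets the counters and program counter via ping-pong, so that the ``for every $\sigma$'' in Definition~\ref{defn:respects} is harmless. In short: the paper's proof is shorter but leans on a strong (and unstated) expressiveness hypothesis; yours is longer but self-contained.
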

\begin{proof}
  Let $\eta$ be a communication action that does not respect $\astate$, and assume that connector $\gamma$ has
  synchronous links $\med{p2q}$ and $\med{q2p}$, from $\pid p$ to $\pid q$ and conversely
  (e.g., \reo{Sync} in Fig.~\ref{fig:aut}).
  Synchronous links are always enabled and do not change $\astate$.

  Let $f$ be a total function implemented at $\pid p$ and consider the choreography
  \begin{align*}
  C \equiv{} & \m{def}\ X=\valcom{\pid q}y{\pid p}x\thru\gamma;\\
  & \phantom{\m{def}\ X={}}                                      
  \m{if}\ (\pid p.f(x)=0)\ \m{then}\ \valcom{\pid p}{x+1}{\pid q}y\thru\gamma;X\\
  & \phantom{\m{def}\ X=\m{if}\ (\pid p.f(x)=0)\ }         
  \m{else}\ \eta\thru\gamma\\
  & \m{in}\ \valcom{\pid p}{0}{\pid q}y\thru\gamma;\ X
  \end{align*}

  In this choreography, $\pid q$ sequentially sends the natural numbers to $\pid p$, which applies $f$ to its input and
  proceeds if the result is $0$.
  If $\pid q$ sends a value where $f$ is not $0$, the choreography attempts to perform $\eta$ and
  deadlocks.
  Then $C$ respects $\astate$ iff $f$ is constantly equal to $0$, which by Rice's theorem is not decidable.
\end{proof}

\begin{remark}
Undecidability of deadlock-freedom arises because of the
new ways in which a choreography and connectors can affect each other, which did not
exist in previous work.
Specifically, deadlock occurs if a connector's current state has no transitions
for the interactions in the choreography's current $\tilde\eta$. In previous
models, this can never happen, since the choreography syntax matches the
hard-wired communication semantics \emph{by definition}. Violation of
respectfulness is, thus, a unique byproduct of allowing custom communication
semantics, through connectors. Concretely, the proof of
Thm.~\ref{thm:respect-undec} relies on the existence of a communication action
that does not respect $\conn$, which does not exist in previous models.
\qed
\end{remark}

We can approximate respectfulness by a decidable relation, called \emph{compatibility}, essentially by abstracting
away from data.
The key point is that a conditional satisfies compatibility only if both its branches satisfy
compatibility.

\begin{definition}
  \label{defn:compat}

\begin{figure}[t]
  \small
  \begin{eqnarray*}
    &\infer[\rname{CC}{Com}]
    {
      \Gamma\vdash^\conn_\astate C
    }
    {
    \astate(\gamma) = \langle s,\mu\rangle
    &
      \left\{
      \Gamma\vdash^\conn_{\astate[\gamma\to\langle s',\mu'\rangle]}\tilde\eta'\thru\gamma;C'
      \left|
      \begin{gathered}
      C\equiv\tilde\eta\thru\gamma;C'
      \\\wwedge
      \tilde\eta,\mu\tr[\phi]\tilde\eta',\mu'
      \\\wwedge
      s\tr[\phi]_\gamma s'
      \end{gathered}
      \right.\right\}
      & (\dagger)
    }
    \\[1ex]
    &\infer[\rname{CC}{Done}]
    {
      \Gamma\vdash^\conn_\astate \emptyset\thru\gamma;C
    }
    {
      \Gamma\vdash^\conn_\astate C
    }
    \qquad
	\infer[\rname{CC}{Nil}]
    {
      \Gamma\vdash^\conn_\astate\nil
    }
    {}
    \qquad
    \infer[\rname{CC}{Cond}]
    {
      \Gamma\vdash^\conn_\astate\gencond
    }
    {
      \Gamma\vdash^\conn_\astate C_1
      &
      \Gamma\vdash^\conn_\astate C_2
    }
    \\[1ex]
    &\infer[\rname{CC}{Def}]
    {
      \Gamma\vdash^\conn_\astate\genrec
    }
    {
      \Gamma,(X:\astate_X)\vdash^\conn_\astate C_1
      &
      \Gamma,(X:\astate_X)\vdash^\conn_\astate C_2
    }
    \qquad
    \infer[\rname{CC}{Call}]
    {
      \Gamma\vdash^\conn_\astate X
    }
    {
      (X:\mathcal A) \in \Gamma
    }
  \end{eqnarray*}
  \caption{Cho-Reo-graphy, compatibility relation.
    The side condition $(\dagger)$ reads: the set of judgments on the left is
    nonempty. We abuse notation in rule \rname{CC}{Com} to indicate that all
    judgments in this set must be true.}
  \label{fig:compatibility}
\end{figure}

  Let $C$ be a choreography, $\conn$ be a connector mapping, and $\astate$ be an automaton state function.
  We say that $C$ and $\conn$ are \emph{compatible} by automaton state function $\astate$ if $\vdash^\conn_\astate C$, where the
  relation $\vdash$ is defined by the rules in Fig.~\ref{fig:compatibility}.
  We say $C$ and $\conn$ are compatible, written $\vdash^\conn C$, if $\vdash^\conn_{\astate_0}C$ with $\astate_0$ as in Definition~\ref{defn:respects}.

  Relation $\vdash$ uses a context $\Gamma$, defined inductively as
 $\Gamma \mathrel{::=} (X:\astate), \Gamma \mid \cdot $,
  and an abstraction of the labelled reductions for communications from Fig.~\ref{fig:compat},
$\tilde\eta,\mu\tr[\phi]\tilde\eta',\mu'$. The latter models a symbolic execution of communications; it is defined as
in Fig.~\ref{fig:compat}, with two differences: (i)~$\sigma$ is removed from the domain of
the reduction and (ii)~in rule \rname{C}{SendVal}, $v$ is a fresh token.
\end{definition}

We comment on this relation.
Intuitively, $X:\astate \in \Gamma$ indicates that procedure $X$ can be called only whenever the automata have current states
$\astate$; this is encoded in rules $\rname{CC}{Def}$ and $\rname{CC}{Call}$ (in the former rule, a unique automaton state
function $\astate_X$ is stipulated; in the latter rule, it is checked against the current automaton state function $\astate$).
Together with the fact that we allow actions to be swapped in rule \rname{CC}{Com}, but not recursive calls to be unfolded,
this means that the recursive structures of the choreography and the automata in the connector mapping must be similar (i.e., the loops in the automata must match the recursions in the choreography).
Furthermore, in order for these rules to ensure respectfulness, the transition relation in the
automaton also needs to be confluent (cf.\ Thm.~\ref{thm:compat-respect}).

\begin{remark}
  Compatibility can become more robust/modular by disregarding connectors
  not occurring in procedure bodies in $\rname{CC}{Def}$.
  We chose the current formulation for simplicity.
  \qed
\end{remark}

We now revisit our previous examples to demonstrate cases where the compatibility relation
constitutes a precise approximation of respectfulness.

\begin{example}[Book sale]
  \label{ex:respect1}
	We illustrate how compatibility works in the context of our running example by revisiting choreography $\{ \com{\pid a\langle\mathit{money}\rangle}{\pid b},
	\com{\pid c\langle\mathit{book}\rangle}{\pid s} \}\allowbreak\thru\med{ac2bs};\nil$ with five
different connector mappings from previous examples.

	\begin{itemize}
		\item[\yes]
		Let $\conn$ and $\astate$ be defined as in Ex.~\ref{ex:motivating+reduc+ac2bs}.
		Using Fig.~\ref{fig:compatibility}, we derive:
		\begin{equation*}
			\dfrac{%
				\raisebox{-10pt}{$\astate(\med{ac2bs}) = \langle 1, \emptyset \rangle$}
				\qquad
				\dfrac{%
					\dfrac{%
					}{%
						{\cdot} \vdash_{\astate[\med{ac2bs}\mapsto\langle1,\emptyset\rangle]}^\conn \nil
					}
					\raisebox{.5pt}{\rlap{\scriptsize\rname{CC}{Nil}}}
				}{%
					{\cdot} \vdash_{\astate[\med{ac2bs}\mapsto\langle1,\emptyset\rangle]}^\conn \emptyset\thru\med{ac2bs};\nil
				}
				\raisebox{.5pt}{\rlap{\scriptsize\rname{CC}{Done}}}
			}{%
				{\cdot} \vdash_\astate^\conn \{ \com{\pid a\langle\mathit{money}\rangle}{\pid b}, \com{\pid c\langle\mathit{book}\rangle}{\pid s} \}\thru\med{ac2bs};\nil
			}
			\raisebox{.5pt}{\rlap{\scriptsize\rname{CC}{Com}}}
		\end{equation*}
		Thus, the choreography and the connector mapping are compatible.
                Corollary~\ref{cor:compat-sound} below
                implies that the connector mapping respects the choreography.

		\item[\yes]
		Let $\conn$ and $\astate$ be defined as in Ex.~\ref{ex:motivating+flex}.
		Furthermore, let $\astate_2' = \astate[\med{ac2bs}\mapsto\langle2,\emptyset\rangle]$, let $\astate_{\co2}' = \astate[\med{ac2bs}\mapsto\langle\co2,\emptyset\rangle]$, and let $\astate'' = \astate_2'[\med{ac2bs}\mapsto\langle1,\emptyset\rangle] = \astate_{\co2}'[\med{ac2bs}\mapsto\langle1,\emptyset\rangle] = \astate$.
		Using Fig.~\ref{fig:compatibility}, we derive:
		\begin{equation*}
			\dfrac{%
				\raisebox{-14pt}{$\begin{array}{@{}c@{}}
					\astate(\med{ac2bs})
				\\	{} = \langle 1, \emptyset \rangle
				\end{array}$}
				\qquad
				\dfrac{%
					\raisebox{-14pt}{$\begin{array}{@{}c@{}}
						\astate_{\co2}'(\med{ac2bs})
					\\	{} = \langle\co2,\emptyset\rangle
					\end{array}$}
					\qquad
					\dfrac{%
						\dfrac{%
						}{%
							{\cdot} \vdash_{\astate''}^\conn \nil
						}
					}{%
						\begin{array}{@{}c@{}}
							{\cdot} \vdash_{\astate''}^\conn \emptyset\thru
						\\	\med{ac2bs};\nil
						\end{array}
					}
				}{%
					{\cdot} \vdash_{\astate_{\co2}'}^\conn \begin{array}[t]{@{}c@{}}
						\{\com{\pid a\langle\mathit{money}\rangle}{\pid b}\}
					\\	\thru\med{ac2bs};\nil
					\end{array}
				}
				\qquad
				\dfrac{%
					\raisebox{-14pt}{$\begin{array}{@{}c@{}}
						\astate_{2}'(\med{ac2bs})
					\\	{} = \langle2,\emptyset\rangle
					\end{array}$}
					\qquad
					\dfrac{%
						\dfrac{%
						}{%
							{\cdot} \vdash_{\astate''}^\conn \nil
						}
						\raisebox{.5pt}{\rlap{\scriptsize\rname{CC}{Nil}}}
					}{%
						\begin{array}{@{}c@{}}
							{\cdot} \vdash_{\astate''}^\conn \emptyset\thru
						\\	\med{ac2bs};\nil
						\end{array}
					}
					\raisebox{.5pt}{\rlap{\scriptsize\rname{CC}{Done}}}
				}{%
					{\cdot} \vdash_{\astate_2'}^\conn \begin{array}[t]{@{}c@{}}
						\{\com{\pid c\langle\mathit{book}\rangle}{\pid s}\}
					\\	\thru\med{ac2bs};\nil
					\end{array}
				}
				\raisebox{.5pt}{\rlap{\scriptsize\rname{CC}{Com}}}
			}{%
				{\cdot} \vdash_\astate^\conn \{ \com{\pid a\langle\mathit{money}\rangle}{\pid b}, \com{\pid c\langle\mathit{book}\rangle}{\pid s} \}\thru\med{ac2bs};\nil
			}
			\raisebox{.5pt}{{\scriptsize\rname{CC}{Com}}}
		\end{equation*}
		The bottom application of rule \rname{CC}{Com} requires two subderivations: one to cover the case where connector $\med{ac2bs}$ makes a transition to state $\co2$ (left subderivation), and another to cover the case where $\med{ac2bs}$ makes a transition to state $2$ (right subderivation).
		In both cases, we have compatibility.

		Thus, the choreography and the connector mapping are compatible.
                Corollary~\ref{cor:compat-sound} below
                implies that the connector mapping respects the choreography.

		\item[\no]
		Let $\conn$ and $\astate$ be defined as in Ex.~\ref{ex:motivating+resp1}.
		Furthermore, let $\astate_2' = \astate[\med{ac2bs}\mapsto\langle2,\emptyset\rangle]$, and let $\astate_{\co2}' = \astate[\med{ac2bs}\mapsto\langle\co2,\emptyset\rangle]$.
		Using Fig.~\ref{fig:compatibility}, we attempt:
		\begin{equation*}
			\dfrac{%
				\raisebox{-14pt}{$\begin{array}{@{}c@{}}
					\astate(\med{ac2bs})
				\\	{} = \langle 1, \emptyset \rangle
				\end{array}$}
				\qquad
				\dfrac{%
					\raisebox{-14pt}{$\begin{array}{@{}c@{}}
						\astate_{\co2}'(\med{ac2bs})
					\\	{} = \langle\co2,\emptyset\rangle
					\end{array}$}
					\qquad
					\dfrac{%
						\dfrac{%
						}{%
							{\cdot} \vdash_{\astate''}^\conn \nil
						}
					}{%
						\begin{array}{@{}c@{}}
							{\cdot} \vdash_{\astate''}^\conn \emptyset\thru
						\\	\med{ac2bs};\nil
						\end{array}
					}
				}{%
					{\cdot} \vdash_{\astate_{\co2}'}^\conn \begin{array}[t]{@{}c@{}}
						\{\com{\pid a\langle\mathit{money}\rangle}{\pid b}\}
					\\	\thru\med{ac2bs};\nil
					\end{array}
				}
				\qquad
				\dfrac{%
					\raisebox{0pt}{$\begin{array}{@{}c@{}}
						\astate_{2}'(\med{ac2bs})
					\\	{} = \langle2,\emptyset\rangle
					\end{array}$}
					\qquad
					\setlength{\fboxsep}{-1pt}
					\boxed{\phantom{
						\begin{array}{@{}c@{}}
							\strut\\\strut
						\end{array}\qquad\qquad\qquad
					}}
				}{%
					{\cdot} \vdash_{\astate_2'}^\conn \begin{array}[t]{@{}c@{}}
						\{\com{\pid c\langle\mathit{book}\rangle}{\pid s}\}
					\\	\thru\med{ac2bs};\nil
					\end{array}
				}
				\raisebox{.5pt}{\rlap{\scriptsize\rname{CC}{Com}}}
			}{%
				{\cdot} \vdash_\astate^\conn \{ \com{\pid a\langle\mathit{money}\rangle}{\pid b}, \com{\pid c\langle\mathit{book}\rangle}{\pid s} \}\thru\med{ac2bs};\nil
			}
			\raisebox{.5pt}{{\scriptsize\rname{CC}{Com}}}
		\end{equation*}
		This attempted derivation is the same as in the second \yes-item, except the subderivation inside the box has become invalid: under our current $\conn$, connector $\med{ac2bs}$ in state $2$ has no transition labelled with $\pid c \flow \pid s$.

		Thus, the choreography and the connector mapping are incompatible.
		In fact, in this case, the choreography may deadlock under $\conn$.

		\item[\no]
		Let $\conn$ and $\astate$ be defined as in Ex.~\ref{ex:motivating+resp2}.
		Using Fig.~\ref{fig:compatibility}, we attempt:
		\begin{equation*}
			\dfrac{%
				\raisebox{-10pt}{$\astate(\med{ac2bs}) = \langle 1, \emptyset \rangle$}
				\qquad
				\setlength{\fboxsep}{-1pt}
				\boxed{\phantom{
					\begin{array}{@{}c@{}}
						\strut\\\strut
					\end{array}\qquad\qquad\qquad
				}}
			}{%
				{\cdot} \vdash_\astate^\conn \{ \com{\pid a\langle\mathit{money}\rangle}{\pid b}, \com{\pid c\langle\mathit{book}\rangle}{\pid s} \}\thru\med{ac2bs};\nil
			}
			\raisebox{.5pt}{\rlap{\scriptsize\rname{CC}{Com}}}
		\end{equation*}
		This attempted derivation is the same as in the first \yes-item, except the subderivation inside the box has become invalid: under our current $\conn$, connector $\med{ac2bs}$ in state $1$ has no transition labelled with $\pid a \flow \pid b \wedge \pid c \flow \pid s$.

		Thus, the choreography and the connector mapping are incompatible.
		In fact, in this case, the choreography deadlocks under $\conn$.

		\item[\no]
		Let $\conn$ and $\astate$ be defined as in Ex.~\ref{ex:motivating+resp3}.
		Furthermore, let $\diamondsuit$ and $\heartsuit$ denote two fresh token values, let $\mu = \{m_1\mapsto\bot,m_2\mapsto\bot\}$, let $\mu' = \{m_1\mapsto\diamondsuit,m_2\mapsto\bot\}$, let $\mu'' = \{m_1\mapsto\diamondsuit,m_2\mapsto\heartsuit\}$, let $\astate' = \astate[\med{ac2bs} \mapsto \langle2,\mu'\rangle]$, and let $\astate'' = \astate[\med{ac2bs} \mapsto \langle2,\mu''\rangle]$.
		Using Fig.~\ref{fig:compatibility}, we attempt:
		\begin{equation*}
			\dfrac{%
				\raisebox{-14pt}{$\begin{array}{@{}c@{}}
					\astate(\med{ac2bs})
				\\	{} = \langle 1, \mu \rangle
				\end{array}$}
				\qquad
				\dfrac{%
					\raisebox{-14pt}{$\begin{array}{@{}c@{}}
						\astate'(\med{ac2bs})
					\\	{} = \langle 2, \mu' \rangle
					\end{array}$}
					\qquad
					\dfrac{%
						\raisebox{0pt}{$\begin{array}{@{}c@{}}
							\astate''(\med{ac2bs})
						\\	{} = \langle 3, \mu'' \rangle
						\end{array}$}
						\qquad
						\setlength{\fboxsep}{-1pt}
						\boxed{\phantom{
							\begin{array}{@{}c@{}}
								\strut\\\strut
							\end{array}\qquad\qquad\qquad
						}}
					}{%
						{\cdot} \vdash_{\astate''}^\conn \begin{array}[t]{@{}c@{}}
							\{ \pid b.\mathit{money}?\diamondsuit, \pid s.\mathit{book}?\heartsuit \}
						\\	\thru\med{ac2bs};\nil
						\end{array}
					}
					\raisebox{.5pt}{\rlap{\scriptsize\rname{CC}{Com}}}
				}{%
					{\cdot} \vdash_{\astate'}^\conn \begin{array}[t]{@{}c@{}}
						\{ \pid b.\mathit{money}?\diamondsuit, \com{\pid c\langle\mathit{book}\rangle}{\pid s} \}
					\\	\thru\med{ac2bs};\nil
					\end{array}
				}
				\raisebox{.5pt}{\rlap{\scriptsize\rname{CC}{Com}}}
			}{%
				{\cdot} \vdash_{\astate}^\conn \{ \com{\pid a\langle\mathit{money}\rangle}{\pid b}, \com{\pid c\langle\mathit{book}\rangle}{\pid s} \}\thru\med{ac2bs};\nil
			}
			\raisebox{.5pt}{{\scriptsize\rname{CC}{Com}}}
		\end{equation*}
		This attempted derivation fails, because the intended subderivation inside box (the receive of $\diamondsuit$, followed by the receive of $\heartsuit$) is invalid: under our current $\conn$, connector $\med{ac2bs}$ in state 3 has no transition labelled with $m_1 \flow \pid b$.

		Thus, the choreography and the connector mapping are incompatible.
		In fact, in this case, the choreography deadlocks under $\conn$.
		\qed
	\end{itemize}
\end{example}

However, the restriction that a choreography and the automata in its connector mapping must have similar recursive structures (for them to be judged compatible) implies there exist connector mappings that respect their choreographies, but that
cannot be shown to do so by means of the compatibility relation -- which is unavoidable in view of our undecidability
result.
We illustrate this by some examples.

\begin{example}
  \label{ex:respect2}
  Let $C$ be the simple choreography: $$\rec{X}{\com{\pid p}{\pid q}\thru\gamma;\com{\pid p}{\pid q}\thru\gamma;\com{\pid r}{\pid s}\thru\gamma;X}{X}$$ and $\conn(\gamma)$ a connector that allows communications from $\pid p$ to $\pid q$ to occur simultaneously with communications from $\pid r$ to $\pid s$ (e.g., $\reo{Barrier}$ in Fig.~\ref{fig:aut}).
  Then $C$ is deadlock-free, since structural precongruence allows the second communication from
  $\pid p$ to $\pid q$ to be ``delayed'' and the communication from $\pid r$ to $\pid s$ to be
  ``pushed forward'':
  \begin{align*}
    &
    \com{\pid p}{\pid q}\thru\gamma;\com{\pid p}{\pid q}\thru\gamma;\com{\pid r}{\pid s}\thru\gamma \\
    \equiv{} &
    \com{\pid p}{\pid q}\thru\gamma;\{\com{\pid p}{\pid q},\com{\pid r}{\pid s}\}\thru\gamma
    & \mbox{by} & \rname{C}{Eta-Split}\\
    \equiv{} &
    \com{\pid p}{\pid q}\thru\gamma;\com{\pid r}{\pid s}\thru\gamma;\com{\pid p}{\pid q}\thru\gamma
    & \mbox{by} & \rname{C}{Eta-Split}
  \end{align*}
  However, $\not\vdash^\conn C$, since the second communication from $\pid p$ to $\pid q$ in the body of $X$ cannot be consumed without unfolding the definition of $X$.

  In this example, the recursive structure of $X$ (two communications from $\pid p$ to $\pid q$ and one from $\pid r$ to $\pid s$) differs from the recursive structure of $\conn(\gamma)$ (one communication between each pair of processes).
  \qed
\end{example}

\begin{example}
  \label{ex:respect3}
  Consider now the choreography $C$ defined as $$\rec{X}{\com{\pid p}{\pid q}\thru\gamma;\com{\pid r}{\pid s}\thru\gamma;X}{X}$$ where $\conn(\gamma)$ only allows communications from $\pid p$ to $\pid q$.
  Again $C$ is deadlock-free, since structural precongruence allows the communications from $\pid r$ to $\pid s$ to be indefinitely postponed.
  However, $\not\vdash^\conn C$.
  In general, compatibility ensures that the choreography is not only deadlock-free,
  but also that there is a correspondence between the recursive structure of the choreography
    and the recursive structure of the connectors: the connector must allow all interactions in the body
    of a definition to be executed before calling other procedures.
  \qed
\end{example}



\begin{theorem}[Compatibility Preservation]
  \label{thm:compat-respect}
  Let $C$ and $C'$ be choreographies, $\conn$ be a connector mapping,
  $\sigma$ and $\sigma'$ be choreography states, and $\astate$ and $\astate'$ be automaton state functions.
  If the transition relation in each $\conn(\gamma)$ is confluent,
  $\vdash^\conn_\astate C$ and $C,\sigma,\astate\redM C',\sigma',\astate'$, then $\vdash^\conn_{\astate'} C'$.
\end{theorem}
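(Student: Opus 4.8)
The plan is to prove this as a subject-reduction (type-preservation) result, by induction on the derivation of the reduction $C,\sigma,\astate\redM C',\sigma',\astate'$ with a case analysis on the last rule applied. Since reductions may occur under a procedure definition (rule \rname{C}{Ctx}), I would first generalise the statement to an arbitrary typing context, proving that $\Gamma\vdash^\conn_\astate C$ and $C,\sigma,\astate\redM C',\sigma',\astate'$ together imply $\Gamma\vdash^\conn_{\astate'}C'$, so that the induction hypothesis is available at the extended context $\Gamma,(X:\astate_X)$. Before the main induction I would establish one auxiliary lemma: structural precongruence preserves compatibility, i.e.\ $\Gamma\vdash^\conn_\astate C$ and $C\precongr C'$ imply $\Gamma\vdash^\conn_\astate C'$. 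This lemma is what lets the \rname{C}{Struct} case go through (rewrite $C$ into the reducible form, apply the induction hypothesis, then rewrite the reduct back). Its proof is itself an induction over the precongruence rules; the rules \rname{C}{EtaEnd} and \rname{C}{ProcEnd} follow directly from \rname{CC}{Done}/\rname{CC}{Nil}, while \rname{C}{Unfold} needs a substitution lemma stating that replacing a call $X$ (checked against $\astate_X$ by \rname{CC}{Call}) with the definition body $C_2$ (checked against $\astate_X$ by \rname{CC}{Def}) preserves compatibility.

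For the structural cases of the main induction, \rname{C}{Cond} is immediate, because \rname{CC}{Cond} already requires both branches $C_1,C_2$ to be compatible at the unchanged $\astate$, so the chosen branch $C_i$ is compatible (this is exactly the point where compatibility is strictly stronger than respectfulness). For \rname{C}{Ctx}, the whole term is $\genrec$ with $\vdash^\conn_\astate$ obtained from \rname{CC}{Def}; I would apply the generalised induction hypothesis to the premise $C_1,\sigma,\astate\redM C_1',\sigma',\astate'$ at context $\Gamma,(X:\astate_X)$ to retype $C_1'$ at $\astate'$, and then reassemble with \rname{CC}{Def}. The delicate bookkeeping here is that the definition body $C_2$ must remain compatible at the new state $\astate'$; this holds because the reduction of $C_1$ only advances the connectors it actually uses, and the interplay between $\astate_X$ and the reached states keeps the recursion points aligned.

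The heart of the argument, and the only place confluence is needed, is the \rname{C}{Com} case: $C\equiv\tilde\eta\thru\gamma;C''$ fires one transition of the automaton $\conn(\gamma)$, yielding $\astate'=\astate[\gamma\mapsto\langle s',\mu'\rangle]$. The difficulty is that the derivation of $\vdash^\conn_\astate C$ via \rname{CC}{Com} may have analysed a different connector $\gamma_0$ at the head than the connector $\gamma$ fired by the concrete step, and that it reasons over a data-abstracted (symbolic, fresh-token) reduction $\tilde\eta,\mu\tr[\phi]\tilde\eta',\mu'$ rather than the concrete $\tilde\eta,\sigma,\mu\tr[\phi]\tilde\eta',\sigma',\mu'$. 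I would split into two subclaims. First, a coherence lemma: the symbolic reduction relation faithfully over-approximates the concrete one, since dropping $\sigma$ and using a fresh token in \rname{C}{SendVal} never removes a transition and the automaton transition depends only on the flow label $\phi$; consequently the concrete step corresponds, under a renaming between values and tokens, to a member of the set quantified over in \rname{CC}{Com}, and that member is compatible by hypothesis. Second, when $\gamma_0\neq\gamma$ (or when the head $\tilde\eta$ could be regrouped by \rname{C}{Eta-Split}), I would show that the two firings commute: by the syntactic conditions of distinct receivers and consistent sends the interactions act on disjoint processes, so their flow labels are independent, and confluence of each $\conn(\gamma)$ closes the resulting diamond, letting the compatibility derivation obtained for $\gamma_0$ be transported across the $\gamma$-step to re-establish $\vdash^\conn_{\astate'}C'$.

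I expect the main obstacle to be exactly this reconciliation in the \rname{C}{Com} case: matching the connector and firing order chosen by the compatibility derivation against the connector actually fired, under out-of-order execution. This is where confluence is indispensable --- it is the hypothesis that lets partial firings of independent (or multicast) transitions be reordered and re-merged so that the analysed branch of \rname{CC}{Com} and the executed branch reach the same automaton states. Everything else (the token-versus-value coherence and the context discipline for recursion) is routine by comparison, but care is needed to keep the symbolic abstraction and the context bookkeeping exactly synchronised with the operational semantics.
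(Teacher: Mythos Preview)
Your overall architecture---generalise to an arbitrary context $\Gamma$, establish that compatibility is preserved by structural precongruence, then do a case analysis on the last reduction rule---is exactly the paper's approach. The paper's proof is a one-line sketch (``case analysis on the reduction \ldots\ compatibility is preserved by structural precongruence''), and your proposal is a reasonable unpacking of that sketch.

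Where you diverge from the paper is in locating the use of confluence. The paper is explicit that confluence is needed ``to make sure unfolding cannot add unwanted reductions,'' i.e.\ it is used in the \rname{C}{Unfold} case of the precongruence-preservation lemma, not in the \rname{C}{Com} case of the main induction. Your substitution lemma for \rname{C}{Unfold} is the right statement, but its proof is not routine: after replacing the call $X$ by the body $C_2$, communications at the head of $C_2$ can be swapped forward (via \rname{C}{Eta-Eta} and \rname{C}{Eta-Split}) past the communications that were already at the head of $C_1[\cdot]$, exposing \emph{new} head-firings that \rname{CC}{Com} must now cover. Confluence of the automata is what guarantees these newly-exposed firings reach automaton states from which the old compatibility derivation can be reassembled. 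You should move the confluence argument there.

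Conversely, your placement of confluence in the \rname{C}{Com} case is largely unnecessary. Rule \rname{CC}{Com} ranges over \emph{all} head rewritings $C\equiv\tilde\eta\thru\gamma;C'$ (for every $\gamma$, every split, every swap short of unfolding) and over \emph{all} matching automaton transitions; this is confirmed by the algorithm in the proof of Theorem~\ref{thm:compat-decide}. So if the concrete step fires some $\gamma$ without first unfolding, that firing is already a member of the set \rname{CC}{Com} quantified over, and the required judgement is given directly---no diamond to close. Your concern that the derivation ``may have analysed a different connector $\gamma_0$'' does not arise. (And even if it did, firings on distinct connectors commute trivially because they touch disjoint automata; confluence of a single automaton would not be the relevant hypothesis.) The only residual work in the \rname{C}{Com} case is your first subclaim, the symbolic/concrete coherence, which is indeed routine.
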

\begin{proof}
  Straightforward by case analysis on the reduction from $C,\sigma,\astate$ to $C',\sigma',\astate'$, using the fact that the automata are confluent (to make sure unfolding cannot add unwanted reductions), and therefore compatibility is preserved by structural precongruence.
\end{proof}

The hypothesis that the transition relations of automata are confluent is required to make
sure that unfolding cannot add unwanted reductions.

\begin{corollary}[Soundness of Compatibility]
\label{cor:compat-sound}
  Under the assumptions of Theorem \ref{thm:compat-respect}, if
  $\vdash^\conn_\astate C$, then $\conn$ in $\astate$ respects $C$.
\end{corollary}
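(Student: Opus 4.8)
The plan is to obtain Corollary~\ref{cor:compat-sound} from Theorem~\ref{thm:compat-respect} by iterating preservation and then extracting a progress argument from the compatibility derivation. Unfolding Definition~\ref{defn:respects}, I must show that if $\vdash^\conn_\astate C$ and $C,\sigma,\astate\redM^\ast\tilde\eta\thru\gamma;C',\sigma',\astate'$, then $\tilde\eta\thru\gamma;C',\sigma',\astate'\redM^\ast C',\sigma'',\astate''$ for some $\sigma'',\astate''$. First I would lift Theorem~\ref{thm:compat-respect} to reduction sequences: by a routine induction on the length of $C,\sigma,\astate\redM^\ast\tilde\eta\thru\gamma;C',\sigma',\astate'$ (each single step preserving compatibility, which is where confluence of the automata $\conn(\gamma)$ is consumed), I get $\vdash^\conn_{\astate'}\tilde\eta\thru\gamma;C'$. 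This reduces the corollary to a single \emph{progress lemma}: if $\cdot\vdash^\conn_{\astate'}\tilde\eta\thru\gamma;C'$, then $\tilde\eta\thru\gamma;C',\sigma',\astate'$ can concretely reduce to $C'$.

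I would prove the progress lemma by induction on the (necessarily finite) derivation of $\cdot\vdash^\conn_{\astate'}\tilde\eta\thru\gamma;C'$. If the head is empty, the derivation ends with \rname{CC}{Done}, and rule \rname{C}{EtaEnd} gives $\emptyset\thru\gamma;C'\precongr C'$, so $C'$ is reached immediately. Otherwise the derivation ends with \rname{CC}{Com}, whose non-emptiness side condition guarantees at least one abstract transition $\tilde\eta,\mu\tr[\phi]\tilde\eta',\mu'$ with $s\tr[\phi]_\gamma s'$ and a strictly smaller subderivation $\cdot\vdash^\conn_{\astate'[\gamma\mapsto\langle s',\mu'\rangle]}\tilde\eta'\thru\gamma;C'$. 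I would mirror this abstract transition by a concrete one and fire rule \rname{C}{Com} for a single step $\tilde\eta\thru\gamma;C',\sigma',\astate'\redM\tilde\eta'\thru\gamma;C',\sigma'',\astate'[\gamma\mapsto\langle s',\mu_c'\rangle]$, then apply the induction hypothesis. Since a non-empty head can never label a leaf (leaves are \rname{CC}{Nil} or \rname{CC}{Call}, reachable only after \rname{CC}{Done} strips the head) and the subderivation is strictly smaller, the chosen chain of \rname{CC}{Com} steps must eventually drain $\tilde\eta$ and reach $C'$.

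The crux, and the step I expect to be the main obstacle, is the \emph{lifting} of an abstract labelled reduction to a concrete one: whenever $\tilde\eta,\mu\tr[\phi]\tilde\eta',\mu'$ is derivable in the symbolic system of Definition~\ref{defn:compat}, the concrete reduction $\tilde\eta,\sigma',\mu_c\tr[\phi]\tilde\eta',\sigma'',\mu_c'$ of Fig.~\ref{fig:compat} must be derivable with the same label $\phi$. The two systems differ only in that the abstract one drops $\sigma$ and writes a fresh token in \rname{C}{SendVal}; all other rules are shared. The only concrete premises that could fail are those of \rname{C}{RecvVal} and \rname{C}{RecvSel}, which demand that cell $m$ hold exactly the value/label $v$ recorded in the runtime term $\genrecv$ (resp.\ $\genchoice$). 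I would discharge these by maintaining, as an invariant along reductions, that every pending runtime term's recorded message equals the content of its source cell: such a term is created only by the matching \rname{C}{SendVal}/\rname{C}{SendSel} step, which stores precisely that message, so abstract and concrete executions remain in lock-step and a receive is enabled concretely exactly when it is abstractly. Moreover, compatibility derivability is invariant under any consistent renaming of stored messages (tokens versus real values), so the induction hypothesis transfers from the symbolic residual state $\langle s',\mu'\rangle$ to the concrete one $\langle s',\mu_c'\rangle$; and the syntactic conditions (distinct receivers, consistent sends) together with the side condition $(\dagger)$ of \rname{C}{Join} ensure that this correspondence is insensitive to the order in which a non-singleton label $\phi$ is assembled, making the lifted concrete step well-defined.
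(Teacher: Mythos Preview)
Your proposal is correct and follows the same skeleton as the paper's proof: iterate Theorem~\ref{thm:compat-respect} along the reduction sequence to obtain $\vdash^\conn_{\astate'}\tilde\eta\thru\gamma;C'$, then extract from the compatibility derivation the concrete reductions that drain $\tilde\eta$. The paper's proof is a two-line sketch that leaves the second step entirely implicit; you have unpacked exactly what it takes---the induction on the compatibility derivation, the role of the non-emptiness side condition in \rname{CC}{Com}, and especially the lifting of abstract labelled reductions (with fresh tokens) to concrete ones (with real values), which the paper does not mention at all.

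One small point worth tightening: in \rname{CC}{Com} the non-emptiness condition $(\dagger)$ ranges over \emph{all} congruent rewritings $C\equiv\tilde\eta\thru\gamma;C'$, so strictly speaking it only guarantees that \emph{some} congruent head admits a transition, not necessarily the particular $\tilde\eta$ you started with. Your induction implicitly picks the branch for the given $\tilde\eta$, which is fine when that branch exists, but if only a congruent head fires you need rule \rname{C}{Struct} to transport the reduction back (and to argue that the resulting continuation is still, up to $\precongr$, the $C'$ required by Definition~\ref{defn:respects}). The paper's sketch glosses over this as well, so it is not a divergence from their argument---just a detail to make explicit if you want the proof to be fully rigorous.
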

\begin{proof}
  If $C,\sigma\astate\redM^\ast\eta;C',\sigma',\astate'$, then, by induction on the length of this sequence of reductions, we use
  Theorem~\ref{thm:compat-respect} to show that $\eta;C',\sigma',\astate'\redM C',\sigma'',\astate''$ for some $\astate''$ and
  $\sigma''$.
\end{proof}

Furthermore, compatibility is decidable.

\begin{theorem}[Decidability of Compatibility]
\label{thm:compat-decide}
  There is an algorithm that, given $C$, $\conn$ and $\astate$, returns \textsc{yes} if $\vdash^\conn_\astate C$ and
  \textsc{no} if $\not\vdash^\conn_\astate C$ in time
  $O(P \times \max p_A \times k \times (\sum d_A) ^ {2k})$, where $p_A$ is the maximum number of ports in any
  automaton, $d_A$ is the maximum number of transitions from a state in each automaton, $k$ is the maximum
  number of communication actions in any procedure definition (or main choreography) and $P$ is the total
  number of procedure definitions (including the main choreography).
\end{theorem}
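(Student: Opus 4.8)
The plan is to exhibit an explicit recursive decision procedure that mirrors the inference rules of Figure~\ref{fig:compatibility}, and then to bound the size of the derivation tree it explores. The procedure $\textsc{Check}(\Gamma,\astate,C)$ proceeds by case analysis on the outermost form of $C$: for $\nil$ it succeeds ($\rname{CC}{Nil}$); for a call $X$ it succeeds iff $(X:\astate)\in\Gamma$ ($\rname{CC}{Call}$); for a conditional it recurses into both branches ($\rname{CC}{Cond}$); for $\emptyset\thru\gamma;C'$ it recurses on $C'$ ($\rname{CC}{Done}$); for a definition $\rec{X}{C_2}{C_1}$ it fixes $\astate_X := \astate$ (the unique consistent choice, since a recursive call must return control exactly when the automata are back in the state they held at the point of definition) and recurses on both $C_1$ and $C_2$ under $\Gamma,(X:\astate)$ ($\rname{CC}{Def}$); and for a nonempty $\tilde\eta\thru\gamma;C'$ it applies $\rname{CC}{Com}$. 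The first thing to establish is correctness: every syntactic form matches exactly one rule (with $\rname{CC}{Com}$ and $\rname{CC}{Done}$ separated by emptiness of $\tilde\eta$), and because fixing $\astate_X := \astate$ is forced, the procedure accepts iff $\vdash^\conn_\astate C$.

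The crux is the treatment of $\rname{CC}{Com}$ together with the termination argument. For $\rname{CC}{Com}$ the procedure enumerates every decomposition $C\equiv\tilde\eta\thru\gamma;C'$ reachable by structural precongruence together with every symbolic transition $\tilde\eta,\mu\tr[\phi]\tilde\eta',\mu'$ matched by an outgoing automaton transition $s\tr[\phi]_\gamma s'$; it verifies that this set is nonempty (side condition $(\dagger)$) and recurses on every resulting continuation $\tilde\eta'\thru\gamma;C'$, succeeding iff all these recursive calls succeed. Here I would argue that only boundedly many decompositions matter: structural precongruence can only reorder communications on pairwise-disjoint processes and split or merge blocks through $\rname{C}{Eta-Split}$ and $\rname{C}{Eta-Eta}$, so the set of communications that can be brought to the front of $C$ has size at most $k$, and the relevant transitions for each are those of the single automaton it passes through. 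Crucially, $\rname{CC}{Com}$ consumes at least one (half-)communication from the leading block, while $\rname{CC}{Call}$ merely inspects $\Gamma$ rather than unfolding the definition; hence no recursive call is ever expanded, and every root-to-leaf path of the derivation tree for one procedure body performs at most $2k$ applications of $\rname{CC}{Com}$ — one per communication, doubled to account for the send and receive halves of asynchronous interactions. This is what guarantees termination and is, I expect, the main obstacle: one must check carefully that structural precongruence neither produces unboundedly many distinct leading blocks nor enables an infinite regress, which hinges precisely on the design choice to swap but never unfold.

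For the complexity bound I would count the work as follows. There are $P$ bodies to analyse (the main choreography and each definition). Each body induces a derivation tree whose $\rname{CC}{Com}$-branching is bounded by the total number of outgoing transitions available, namely $\sum d_A$ over all automata, and whose depth is at most $2k$; hence each tree has at most $(\sum d_A)^{2k}$ leaves. At each node, choosing which communication to bring to the front costs a factor $k$, and checking that a candidate label $\phi$ matches the leading interactions — comparing the ports mentioned in the constraint against the senders and receivers in $\tilde\eta$ — costs $O(\max p_A)$. Multiplying these contributions yields the claimed $O(P\times\max p_A\times k\times(\sum d_A)^{2k})$. Finally I would observe that the memory snapshots along any path stay finite: the symbolic semantics writes a fresh token per send (difference (ii) of Definition~\ref{defn:compat}) and at most $2k$ sends precede a leaf, so each snapshot maps the finitely many cells to one of boundedly many tokens, and the token-equality checks in the premises of $\rname{C}{RecvVal}$ and $\rname{C}{RecvSel}$ are well-defined and do not affect the asymptotic cost.
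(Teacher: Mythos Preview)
Your treatment of $\rname{CC}{Def}$ contains a genuine gap. You fix $\astate_X := \astate$, claiming this is ``the unique consistent choice,'' but $\astate_X$ is existentially quantified in the rule, and the correct witness is the automaton state at the point where $X$ is \emph{called}, not where it is defined. Consider $C = \rec{X}{\nil}{(\tilde\eta\thru\gamma; X)}$ where the interaction $\tilde\eta$ drives $\gamma$ from its initial state $s_0$ to a different state $s_1$. A valid derivation exists: one takes $\astate_X$ to be the state \emph{after} $\tilde\eta$, so that $\rname{CC}{Call}$ succeeds. Your procedure, however, would bind $\astate_X$ to the initial state, process $\tilde\eta$, arrive at the call $X$ with the automaton in $s_1$, observe the mismatch in $\rname{CC}{Call}$, and wrongly return \textsc{no}. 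Your justification (``a recursive call must return control exactly when the automata are back in the state they held at the point of definition'') covers only the degenerate case where $C_1$ is an immediate call to $X$; it fails whenever $C_1$ performs communications before its first call.

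The paper resolves this by treating $\astate_X$ as an \emph{uninstantiated} variable when $\rname{CC}{Def}$ is applied, and binding it lazily to the current automaton state the first time a call to $X$ is reached; all later calls are then checked against this binding, and a mismatch there is a genuine \textsc{no}. This lazy-instantiation idea is the only substantive ingredient missing from your proposal. The remainder of your argument---the termination reasoning based on each application of $\rname{CC}{Com}$ consuming at least one half-interaction, the refusal to unfold recursive calls, and the branching/depth accounting that yields the stated bound---tracks the paper's proof closely.
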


A simple finiteness argument suffices for establishing decidability of
compatibility, since the number of automaton states is finite, the number of
applicable rules at each step is finite, and all rules have a finite number of
premises, and the size of the choreographies in the premises is always smaller
than the size of the choreographies in the conclusions.
Therefore, by non-deterministically guessing the types of all procedures, we can
decide whether $\vdash^\conn_\astate C$ or not. However, we provide a more
intelligent proof that constructs the types for recursive definitions.

\begin{proof}[Proof of Theorem~\ref{thm:compat-decide}]
  We assume that every procedure defined in $C$ is called at least once outside of its body.

  The idea behind our algorithm is to construct a derivation for $\vdash^\conn_\astate C$ by applying
  the rules in Fig.~\ref{fig:compatibility} bottom-up.
  When we meet a term of the form $\genrec$, we focus on $C_1$ first, and leave $\astate_X$ (see rule $\rname{CC}{Def}$)
  unspecified. We instantiate $\astate_X$ later, when we meet $X$ for the first time inside of $C_1$.
  More precisely:
  \begin{enumerate}
  \item Initialize a list $\mathcal L=[\cdot\vdash^\conn_\astate C]$.
  \item While $\mathcal L$ is not empty:
    \begin{enumerate}
    \item Remove the first pending judgement $\Gamma\vdash^\conn_\astate C$ from $\mathcal L$.
    \item If $C$ is $\nil$, proceed to the next iteration.
    \item If $C$ is of the form $\gencond$, then add $\Gamma\vdash^\conn_\astate C_1$ and
      $\Gamma\vdash^\conn_\astate C_2$ at the beginning of $\mathcal L$.
    \item If $C$ is of the form $\genrec$, then add $\Gamma,(X:\astate_X)\vdash^\conn_\astate C_1$ and
      $\Gamma,(X:\astate_X)\vdash^\conn_\astate C_2$, \emph{in this order}, at the \emph{beginning} of
      $\mathcal L$.
      Here, $\astate_X$ is a unique variable representing an unknown state function.
    \item If $C$ is of the form $X$, there are two cases.
      If $\Gamma$ contains $(X:\astate_X)$ with $\astate_X$ instantiated, check whether $\astate_X=\astate$;
      if so, proceed to the next iteration, otherwise return \textsc{no}.
      If $\Gamma$ contains $(X:\astate_X)$ with $\astate_X$ uninstantiated, replace all occurrences of
      $\astate_X$ in $\mathcal L$ by $\astate$ and proceed to the next iteration.\footnote{Note that $\Gamma$
        must contain $(X:\astate_X)$, otherwise the initial choreography is not well-formed.}
    \item Otherwise, $C$ is of the form $\tilde\eta\thru\gamma;C'$.
      Consider all possible ways of rewriting $C$ as $\tilde\eta'\thru\gamma;C'$ by swapping
      independent actions, without unfolding recursive definitions.
      Let $\astate(\gamma)=\langle s,\mu\rangle$.
      For each such $\tilde\eta'$, check whether $\tilde\eta',\mu\tr[\phi]\tilde\eta'',\mu'$ for some
      $\phi$, and in the affirmative case compute $s'$ such that $s\tr[\phi]_\gamma s'$ and add
      $\Gamma\vdash^\conn_{\astate[\gamma\to\langle s',\mu'\rangle]}\tilde\eta''\thru\gamma;C'$ at the beginning
      of $\mathcal L$.
      If no such transitions exist, return \textsc{no}.
    \end{enumerate}
  \item Return \textsc{yes}.
  \end{enumerate}

  Termination of this algorithm is straight-forward: the sum of the sizes of all the choreographies in $\mathcal L$
  stricly decreases at each iteration, and each step terminates in finite time.
  (The size of a choreography is the number of nodes in its abstract syntax tree, except that $\gencom$ and $\gensel$
  count as $2$, while $\genrecv$ and $\genchoice$ count as $1$.)
  Soundness is immediate by observing that the judgements stored in $\mathcal L$ are exactly those that are necessary to
  construct a derivation of $\vdash^\conn_\astate C$, since at each stage there is only one rule that can be applied to
build
  such a derivation, and this rule is determined by the structure of $C$.
  If the algorithm returns \textsc{yes}, then a valid derivation for $\vdash^\conn_\astate C$ can be built.
  If the algorithm returns \textsc{no} because of a mismatch between the state of the automata and a communication
action
  (Step 2.f), then clearly $\not\vdash^\conn_\astate C$.
  If the algorithm returns \textsc{no} because of an incompatibility between the state assigned to a procedure name $X$
  in $\Gamma$ and the state in the current judgement (Step 2.e), then this failure means that we constructed two
  judgements involving $X$ with different automaton state functions, which also implies that $\not\vdash^\conn_\astate
C$.

  To obtain the complexity bound, perform step 2f as follows: consider all possible transitions ($\sum d_A$)
  and check which ones are enabled (naively: go through the current choreography and check each transition,
  which yields $\max p_A \times k$).
  Each transition consumes at least half an interaction, so this can be repeated $\leq 2k$ times for each
  procedure definition.
\end{proof}

Since automata are deterministic, $d_A < 2^{p_A}$ (each subset of ports determines at most one transition),
and $p_A$ is at most the number of processes in the choreography. Thus, this upper bound can be stated
independently of the automata considered.

Although this complexity is high, we believe that compatibility checking is feasible in practice. The bound is
an over-approximation, since choreographies typically contain many causal dependencies among communications
that reduce non-determinism (as in our examples). Previous works on choreographies proposed algorithms with
even worse worst-case complexity, but feasible in practice for the same reason~\cite{LTY15}.

\begin{theorem}[Progress]
  \label{thm:progress}
  Let $C$ be a choreography, $\conn$ be a connector mapping, $\sigma$ be a choreography state and $\astate$ be an automaton state function
  such that $\vdash^\conn_\astate C$.
  Then, either $C\precongr\nil$ ($C$ has terminated) or there exist $C'$, $\sigma'$ and $\astate'$ such that
  $C,\sigma,\astate\redM C',\sigma',\astate'$.
\end{theorem}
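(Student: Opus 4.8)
The plan is to induct on the derivation of the compatibility judgement $\vdash^\conn_\astate C$. Since a bare procedure call cannot reduce on its own, I would first generalise the statement to an arbitrary context $\Gamma$: if $\Gamma\vdash^\conn_\astate C$ then either $C\precongr\nil$, or $C,\sigma,\astate\redM C',\sigma',\astate'$ for some $C',\sigma',\astate'$, or $C$ is blocked only on a state-matched call, i.e.\ $C\precongr X$ with $(X:\astate)\in\Gamma$ (an obligation to be discharged by the enclosing definition). The Theorem is the case $\Gamma=\cdot$, where the last alternative is impossible because a closed choreography has no free procedure name, leaving exactly termination or a reduction.

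The non-recursive cases follow from the last rule applied. Rule $\rname{CC}{Nil}$ gives $C=\nil\precongr\nil$. Rule $\rname{CC}{Cond}$ yields a reduction by $\rname{C}{Cond}$, which is always enabled since the guard evaluates under $\sigma$ to $\mathit{true}$ or $\mathit{false}$. Rule $\rname{CC}{Call}$ lands directly in the third alternative. Rule $\rname{CC}{Done}$ treats $\emptyset\thru\gamma;C'$: by $\rname{C}{EtaEnd}$ we have $\emptyset\thru\gamma;C'\precongr C'$, so I apply the induction hypothesis to the sub-derivation for $C'$ and transport the outcome back across $\precongr$ using $\rname{C}{Struct}$ and transitivity of $\precongr$.

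Rule $\rname{CC}{Com}$ is the central action case: here $C\equiv\tilde\eta\thru\gamma;C'$ and the side condition $(\dagger)$ ensures the premise set is non-empty, so there are $\phi,\tilde\eta',s'$ with $\astate(\gamma)=\langle s,\mu\rangle$, an abstract step $\tilde\eta,\mu\tr[\phi]\tilde\eta',\mu'$, and $s\tr[\phi]_\gamma s'$. The key point is that the abstract labelled reduction of Definition~\ref{defn:compat} is the concrete one of Fig.~\ref{fig:compat} with data erased and sent values replaced by fresh tokens; the send rules are unconditionally applicable and the receive rules read exactly the datum previously deposited in a cell, so any $\phi$ enabled abstractly from $\mu$ is also enabled concretely from the actual $\sigma$ and $\mu$. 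Thus $\tilde\eta,\sigma,\mu\tr[\phi]\tilde\eta',\sigma',\mu'$ holds, rule $\rname{C}{Com}$ gives a reduction of $\tilde\eta\thru\gamma;C'$, and $\rname{C}{Struct}$ lifts it to $C$ using $C\equiv\tilde\eta\thru\gamma;C'$.

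The main obstacle is $\rname{CC}{Def}$, with $C=\genrec$. Applying the induction hypothesis to the sub-derivation for the continuation $C_1$ under $\Gamma,(X:\astate_X)$: if $C_1\precongr\nil$ then $C\precongr\rec{X}{C_2}{\nil}\precongr\nil$ by $\rname{C}{ProcEnd}$; if $C_1$ reduces, then so does $C$ by $\rname{C}{Ctx}$; and a pending call on an outer procedure is simply re-exported as the third alternative for $C$. The delicate sub-case is a pending call on $X$ itself with matching state, $C_1\precongr X$: then $C\precongr\rec{X}{C_2}{C_2}$ by $\rname{C}{Unfold}$, and I invoke the induction hypothesis on the sub-derivation for the procedure body $C_2$. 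For this to make real progress instead of looping, I rely on recursion being guarded — every procedure body begins with a communication or a conditional rather than a recursive call — which excludes the third alternative for $C_2$ and forces it to reduce or terminate. Guardedness is exactly what rules out degenerate definitions such as $\rec{X}{X}{X}$, which satisfy $\vdash^\conn_\astate$ yet are stuck and non-terminated, and would otherwise refute the theorem. I expect the two points needing the most care to be discharging this abstract-to-concrete lifting in $\rname{CC}{Com}$ and fixing the precise guardedness (and context-discarding) conventions that make the recursion case go through.
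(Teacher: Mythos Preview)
Your proposal is correct and considerably more detailed than the paper's own proof, which is a two-sentence sketch: ``If $C\not\precongr\nil$, then $C$ is of the form $\tilde\eta\thru\gamma;C'$ or $\gencond$, eventually inside some recursive definitions. In the latter case, $C$ can always reduce; in the former case, compatibility guarantees that $C$ can reduce.''

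The underlying idea is the same---compatibility supplies the enabling transition for communications, conditionals always step---but your argument is structured as an explicit induction on the compatibility derivation with a generalisation to open contexts $\Gamma$ to handle procedure calls. Two points you work out that the paper leaves implicit are worth noting. First, your lifting of the abstract labelled reduction (tokens, no $\sigma$) to the concrete one in the $\rname{CC}{Com}$ case is exactly what the paper compresses into ``compatibility guarantees that $C$ can reduce''; your observation that sends are unconditionally enabled and receives check only $\mu(m)=v$ (a premise shared by both versions of the relation) is the right justification. Second, you correctly flag that guarded recursion is required: the paper's phrase ``eventually inside some recursive definitions'' tacitly excludes degenerate definitions such as $\rec{X}{X}{X}$, which you rightly observe would otherwise satisfy $\vdash^\conn_\astate$ yet be stuck and non-terminated. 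The paper never states a guardedness convention explicitly, so your making it visible is an improvement in rigour rather than a deviation in strategy.
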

\begin{proof}
  If $C\not\precongr\nil$, then $C$ is of the form $\tilde\eta\thru\gamma;C'$ or $\gencond$, eventually inside some
  recursive definitions.
  In the latter case, $C$ can always reduce; in the former case, compatibility guarantees that $C$ can reduce.
\end{proof}

\begin{theorem}[Deadlock-Freedom by Design]
  \label{thm:df}
  Let $C$ be a choreography, $\sigma$ be a choreography state function, and $\astate$ be an automaton state function.
  If $\vdash^\conn_\astate C$ and $C,\sigma,\astate\redM^\ast C',\sigma',\astate'$, then either $C'\precongr\nil$ or
  there exist $C''$, $\sigma''$ and $\astate''$ such that $C',\sigma',\astate'\redM C'',\sigma'',\astate''$.
\end{theorem}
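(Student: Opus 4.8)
The plan is to show that compatibility is an invariant of reduction and then invoke progress at the endpoint of the sequence. Concretely, I would first argue by induction on the number $n$ of steps in $C,\sigma,\astate\redM^\ast C',\sigma',\astate'$ that $\vdash^\conn_{\astate'} C'$ holds. For $n=0$ we have $C'=C$ and $\astate'=\astate$, so the claim is exactly the hypothesis $\vdash^\conn_\astate C$. For the inductive step, factor the sequence as $C,\sigma,\astate\redM^\ast C_1,\sigma_1,\astate_1\redM C',\sigma',\astate'$; by the induction hypothesis $\vdash^\conn_{\astate_1} C_1$, and Compatibility Preservation (Thm.~\ref{thm:compat-respect}) applied to the single step $C_1,\sigma_1,\astate_1\redM C',\sigma',\astate'$ yields $\vdash^\conn_{\astate'} C'$. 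Thus compatibility is transported all the way to the reduct, with the automaton state function threaded correctly through each step.

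Once $\vdash^\conn_{\astate'} C'$ is established, I would apply Progress (Thm.~\ref{thm:progress}) to the triple $C',\sigma',\astate'$. Progress gives immediately that either $C'\precongr\nil$ or there exist $C''$, $\sigma''$, and $\astate''$ with $C',\sigma',\astate'\redM C'',\sigma'',\astate''$, which is precisely the required conclusion. No fresh case analysis on the reduction rules is needed at this level, because all of that work is already encapsulated in Thms.~\ref{thm:compat-respect} and~\ref{thm:progress}: the former handles the preservation of the head $\tilde\eta\thru\gamma$ and the recursion/conditional cases under structural precongruence, while the latter handles the existence of an enabled transition whenever $C'\not\precongr\nil$.

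The one point requiring care is that Compatibility Preservation carries the hypothesis that each automaton $\conn(\gamma)$ is confluent, so this assumption must be in force here as well (it is inherited from Thm.~\ref{thm:compat-respect}). Confluence is exactly what rules out the pathology that recursion unfolding in the structural precongruence introduces reductions that break compatibility, so without it the induction on the length of the sequence would not go through. Given confluence, the argument is essentially a one-line combination of the invariance result and the single-state progress result, and the main (and only) obstacle is the bookkeeping of keeping $\astate$ synchronized across the induction, which is handled automatically by threading $\astate_1$ and $\astate'$ through the two cited results.
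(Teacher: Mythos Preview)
Your proposal is correct and matches the paper's own proof: both establish that compatibility is preserved along the reduction sequence by iterating Theorem~\ref{thm:compat-respect} (Compatibility Preservation) via induction on its length, and then invoke Theorem~\ref{thm:progress} (Progress) at the endpoint. Your explicit flagging of the inherited confluence hypothesis is appropriate, since the paper's statement of Theorem~\ref{thm:df} silently relies on it through Theorem~\ref{thm:compat-respect}.
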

\begin{proof}[Proof (sketch)]
  From Theorem~\ref{thm:progress}, if $C\not\precongr\nil$, then by Theorem~\ref{thm:compat-respect} we also
  have that $\vdash^\conn_{\astate'}C'$ whenever $C,\sigma,\astate\redM C',\sigma',\astate'$.
  The thesis then follows by induction.
\end{proof}

\section{Connected Processes}
\label{sec:processes}

CR shows how choreographies can be combined with connectors, but it does
not indicate how we can obtain executable implementations. The missing link is determining how a choreography can be compiled to terms representing executable processes that communicate through connectors.
We address this aspect by presenting a process calculus based on standard
I/O actions and a translation (compilation procedure) from CR to this calculus.

\subsection{Syntax and semantics}

We define Connected Processes (CP), the process calculus to represent concrete implementations of choreographies.
The syntax of CP is given in Fig.~\ref{fig:cp_syntax}.
A network $N$ is a parallel composition of processes. A process is written $\proc{\pid p}{\rho}{B}$, where $\pid p$ is
its identifier, $\rho$ its state (mapping variable names to values), and $B$ its behaviour.
Behaviours correspond to local views of choreography interactions. Procedure definitions and calls, conditionals, and
termination ($\nil$) follow the same ideas as in CR. Communication actions implement the local behaviour of
each process in a choreography interaction: sending a value through an output port ($\psend{\port o}{e}$); receiving
a value through an input port ($\precv{\port i}{x}$); selecting a label through an output port ($\psel{\port o}{\ell}$);
and offering a choice on some labels through an input port ($\pbranch{\port i}{\{ \ell_i : B_i\}_{i\in I}}$).

\begin{figure}[t]
\begin{align*}
B ::={} & \psend{\port o}{e};B \mid \precv{\port i}{x};B
\mid \psel{\port o}{\ell};B \mid
\pbranch{\port i}{\{ \ell_i : B_i\}_{i\in I}}
& N,M ::={} & \proc{\pid p}{\rho}{B} \mid (N \ppar M) \mid \nil
\\
{}\mid{} & \cond{e}{B_1}{B_2} \mid \rec{X}{B_2}{B_1} \mid X \mid \nil
\end{align*}
\caption{Connected Processes, Syntax.}
\label{fig:cp_syntax}
\end{figure}

The key difference from choreographies is that communications now refer to actual ports, instead of to connectors (we have no ``$\thru \gamma$'' for communications in the process calculus). This reflects the principle that processes should not know how they are connected~\cite{Arb11,Jon16,JA16}.

The semantics of CP is parameterised on connectors represented as a set of automata $\cconn$ that do not share
any ports. Differently from the automata used in CR, the ones in $\cconn$ use the names of the actual ports to which they are connected (and which are also used by the processes).
Reductions in CP have the form $N, A \toC N', A'$, where $A$ maps each automaton in $\cconn$ to a pair
$\langle s,\mu \rangle$ of its state $s$ and memory snapshot $\mu$.
The key reduction rule of CP is the one for communications.
\[
\infer[\rname{CP}{Com}]
      {
        N,A \toC N',A[a \mapsto \langle s',\mu'\rangle]
      }
      {
	a \in \cconn
	&
        A(a) = \langle s,\mu\rangle
        &
        N,\mu \tr[\phi] N',\mu'
        &
	s \tr[\phi]_a s'
      }
\]
This rule is reminiscent of rule $\rname{C}{Com}$ for choreographies.
In particular, it uses a similar auxiliary reduction relation on pairs of networks and memory snapshots (stated in premise
$N,\mu\tr[\phi] N',\mu'$), whose rules are given in Fig.~\ref{fig:proc-compat}.

\begin{figure}
  \small
  \begin{eqnarray*}
    &\infer[\rname{CP}{SyncVal}]
    {\proc{\pid p}{\rho_{\pid p}}{\psend{\port o}{e};B_{\pid p}}\ppar\proc{\pid q}{\rho_{\pid q}}{\precv{\port i}x;B_{\pid q}},\mu
      \tr[\pidp po\flow\pidp qi]
      \proc{\pid p}{\rho_{\pid p}}B_{\pid p}\ppar\proc{\pid q}{\rho_{\pid q}[x\mapsto v]}B_{\pid q},\mu\rule[1em]{0em}{0em}}
    {e\eval{\rho_{\pid p}}{}v}
  \\[1ex]
  &\infer[\rname{CP}{SendVal}]
    {\proc{\pid p}{\rho}{\psend{\port o}{e};B},\mu
      \tr[\pidp po\flow m]
      \proc{\pid p}{\rho}B,\mu[m\mapsto v]\rule[1em]{0em}{0em}}
    {e\eval{\rho}{}v}
  \\[1ex]
  &\infer[\rname{CP}{RecvVal}]
    {\proc{\pid q}{\rho}{\precv{\port i}x;B},\mu
      \tr[m\flow\pidp qi]
      \proc{\pid q}{\rho[x\mapsto v]}B,\mu\rule[1em]{0em}{0em}}
    {\mu(m)=v}
  \\[1ex]
  &\infer[\rname{CP}{SyncSel}]
    {\proc{\pid p}{\rho_{\pid p}}{\psel{\port o}{\ell_j};B}\ppar\proc{\pid q}{\rho_{\pid q}}{\pbranch{\port
i}{\{\ell_i:B_i\}_{i\in I}}},\mu
      \tr[\pidp po\flow\pidp qi]
      \proc{\pid p}{\rho_{\pid p}}B\ppar\proc{\pid q}{\rho_{\pid q}}B_j,\mu\rule[1em]{0em}{0em}}
    {j \in I}
  \\[1ex]
  &
  \infer[\rname{CP}{SendSel}]
    {\proc{\pid p}{\rho}{\psel{\port o}{\ell};B},\mu
      \tr[\pidp po\flow m]
      \proc{\pid p}{\rho}B,\mu\rule[1em]{0em}{0em}}
    {}
  \\[1ex]
  &\infer[\rname{CP}{RecvSel}]
    {\proc{\pid q}{\rho}{\pbranch{\port i}{\{\ell_i:B_i\}_{i\in I}}},\mu
      \tr[m\flow\pidp qi]
      \proc{\pid q}{\rho}B_j,\mu\rule[1em]{0em}{0em}}
    {\mu(m)=\ell_j & j \in I}
  \\[1ex]
  &\infer[\rname{CP}{Join}]
    {(N_1\ppar N_2),\mu\tr[\phi_1\cup\phi_2](N_1'\ppar N_2'),\sigma'',\mu''\rule[1em]{0em}{0em}}
    {
      N_1,\mu\tr[\phi_1]N_1',\mu'
      &
      N_2,\mu'\tr[\phi_2]N_2',\mu''
    }
\end{eqnarray*}
\caption{Semantics of communications (process level). Side condition $(\dagger)$ in \rname{CP}{Join} is the same as in \rname{C}{Join}.}
\label{fig:proc-compat}
\end{figure}

The remaining rules defining the semantics of CP are standard, and given in Fig.~\ref{fig:cp_semantics}.
Rule $\rname{CP}{Struct}$ uses the structural precongruence relation $\precongr$
(including associativity and commutativity of $\ppar$), defined in the standard way~\cite{CM16b}.

\begin{figure}[t]
\begin{eqnarray*}
&
\infer[\rname{CP}{Cond}]
{
	\proc{\pid p}{\rho}{\cond{e}{B_1}{B_2}}, A
	\toC
	\proc{\pid p}{\rho}{B_i}, A
}
{
	i = 1 \ \text{if } e \eval{\rho}{} \mathsf{true},\quad
	i = 2 \ \text{otherwise}
}
\\[1ex]
&
\infer[\rname{CP}{Ctx}]
{
\proc{\pid p}{v}{\rec{X}{B_2}{B_1}}
\ppar N, A
\toC
\proc{\pid p}{v}{\rec{X}{B_2}{B_1'}}
\ppar N', A'
}{
	\proc{\pid p}{v}{B_1} \ppar N, A
	\toC
	\proc{\pid p}{v}{B'_1} \ppar N', A'
}
\\[1ex]
&\infer[\rname{CP}{Par}]
{
	N \ppar M, A \toC N' \ppar M, A'
}
{
	N,A \ \toC\  N', A'
}
\qquad
\infer[\rname{CP}{Struct}]
{
	N,A \toC N',A'
}
{
	N \precongr M & M,A \toC M',A' & M' \precongr N'
}
\end{eqnarray*}
\caption{Connected Processes, Semantics.}
\label{fig:cp_semantics}
\end{figure}

\subsection{EndPoint Projection (EPP)}
The EPP of a choreography $C$ from CR into CP follows the usual construction, but with an
additional ingredient: we need to add port names associated with communication actions. This is visible in the rules for
projecting the individual behaviour of each process (Fig.~\ref{fig:epp}), notably in the rule for
projecting communications.

\begin{remark}
In Fig.~\ref{fig:epp}, $\port o$ and $\port i$ denote variables that range over concrete ports. Thus, a process $\pid p$ has output ports $\port {o_{\gamma_1}}, \port {o_{\gamma_2}}, \ldots$, and input ports $\port {i_{\gamma_1}}, \port {i_{\gamma_2}}, \ldots$, where $\port o$ and $\port i$ actually stand for $\port {o^{\pid p}}$ (``output port at $\pid p$'') and $\port {i^{\pid p}}$ (``input port at $\pid p$''), while connector $\gamma_1$ knows output ports $\port {o^{\pid p_1}_{\gamma_1}}, \port {o^{\pid p_2}_{\gamma_1}}, \ldots$, and similarly for input ports.
\qed
\end{remark}

\begin{figure}[t]
\small
\begin{eqnarray*}
&
\epp{\tilde\eta\thru\gamma;C}{\pid r} =
	\begin{cases}
		\psend{\port o_\gamma}{e};\epp{C}{\pid r} & \text{if } \pid r = \pid p \text{ and } \gencom \in \tilde\eta \\
		\precv{\port i_\gamma}{x};\epp{C}{\pid r} & \text{if } \pid r = \pid q \text{ and } ( \gencom \in \tilde\eta
\text{ or } \genrecv \in \tilde\eta)\\
		\psel{\port o_\gamma}{\ell};\epp{C}{\pid r} & \text{if } \pid r = \pid p \text{ and } \gensel \in \tilde\eta \\
		\pbranch{\port i_\gamma}{\{ \ell : \epp{C}{\pid r}\}} & \text{if } \pid r = \pid q \text{ and } ( \gensel
\in \tilde\eta
\text{ or } \genchoice \in \tilde\eta)
	\end{cases}
\\[1ex]
&\epp{\gencond}{\pid r} =
	\begin{cases}
		\cond{e}{\epp{C_1}{\pid r}}{\epp{C_2}{\pid r}} & \text{if } \pid r = \pid p \\
		\epp{C_1}{\pid r} \sqcup \epp{C_2}{\pid r} & \pid r\neq\pid p
	\end{cases}
\\[1ex]
&
	\epp{\rec{X}{C_2}{C_1}}{\pid r} = \rec{X}{\epp{C_2}{\pid r}}{\epp{C_1}{\pid r}}
\qquad
\epp{\nil}{\pid r} = \nil
\qquad
\epp{X}{\pid r} = X
\end{eqnarray*}
\caption{Cho-Reo-graphies, Behaviour Projection.}
\label{fig:epp}
\end{figure}

The rule for projecting conditionals uses the standard partial merging operator $\sqcup$, where $B\sqcup B'$
is isomorphic to $B$ and $B'$ up to branching with different labels (see~\cite{CM16b} for details).

We now define the projection of $C$ given a state $\sigma$. As usual, this is the parallel composition of the projections of all processes in $C$.
\[
\epp{C,\sigma}{}  =
\prod_{\pid p \in \pn(C)} \proc{\pid p}{\rho_{\pid p}}{\epp{C}{\pid p}}
\qquad \mbox{ where }\rho_{\pid p}(x)=\sigma(\pid p,x) \mbox{ for each variable $x$ at $\pid p$}
\]
$C$ is \emph{projectable} when $\epp{C,\sigma}{}$ is defined for some $\sigma$. This is equivalent to saying that
$\epp{C,\sigma}{}$ is defined for all $\sigma$.
We illustrate endpoint projection in Ex.~\ref{ex:motivating+impl}
and~\ref{ex:motivating+referee2}.

\begin{example}[Book sale]
\label{ex:motivating+impl}
Continuing with our running example, the choreography presented in
Ex.~\ref{ex:motivating+cr} is projectable, and yields the following network of connected
processes each state $\sigma$.
\setlength{\belowdisplayskip}{0pt}
\begin{align*}
  & \proc{\pid a}{\rho_{\pid a}}{%
    \psend{\port o_{\med{a2c}}}{\mathit{title}};
    \precv{\port i_{\med{c2a}}}{\mathit{price}};
    \cond{(\mathit{happy})}{(%
      \psel{\port o_{\med{a2cbs}}}{\mathit{ok}};
      \psend{\port o_{\med{ac2bs}}}{\mathit{money}};
      \nil
    )\\&\hspace{16.6em}}{(%
      \psel{\port o_{\med{a2cbs}}}{\mathit{ko}};
      \nil
    )}
  } \\
  {}\ppar{}
  & \proc{\pid b}{\rho_{\pid b}}{%
    \pbranch{\port i_{\med{a2cbs}}}{%
      \{
      \mathit{ok}:
      \precv{\port i_{\med{ac2bs}}}{\mathit{money}};
      \nil;\
      \mathit{ko}:
      \nil
      \}
    }
  } \\
  {}\ppar{}
  & \proc{\pid c}{\rho_{\pid c}}{%
    \precv{\port i_{\med{a2c}}}{\mathit{title}};
    \psend{\port o_{\med{c2a}}}{\mathit{price}};
    \pbranch{\port i_{\med{a2cbs}}}{%
      \{
      \mathit{ok}:
      \psend{\port o_{\med{ac2bs}}}{\mathit{book}};
      \nil;\
      \mathit{ko}:
      \nil
      \}
    }
  } \\
  {}\ppar{}
  & \proc{\pid s}{\rho_{\pid s}}{%
    \pbranch{\port i_{\med{a2cbs}}}{%
      \{
      \mathit{ok}:
      \precv{\port i_{\med{ac2bs}}}{\mathit{book}};
      \nil;\
      \mathit{ko}:
      \nil
      \}
    }
  }
\end{align*}
\qed
\end{example}

\begin{example}
\label{ex:motivating+referee2}
It is also worthwhile to note that the following choreographies are
\emph{not} congruent, and that they have \emph{different} EPPs:
\begin{align*}
  C_1 &= \sel{\pid p}{\{\pid q, \pid r\}}{\ell}\thru\gamma;\ \nil
  &
  C_2 &= \sel{\pid p}{\pid q}{\ell}\thru\gamma;\ \sel{\pid p}{\pid r}{\ell}\thru\gamma;\ \nil
\end{align*}
Choreography $C_1$ is syntactic sugar for $\{\sel{\pid p}{\pid
q}{\ell},\sel{\pid p}{\pid r}{\ell}\}\thru\gamma;\nil$. The EPP to $\pid p$,
thus, consists of \emph{one} send; connector $\gamma$ must subsequently ensure
that label $\ell$ is replicated to, and received by, both $\pid q$ and $\pid r$
(i.e., formally, $\gamma$ must be compatible with $C_1$).\footnote{This behaviour
was the motivation for introducing multicasts as abbreviations: the notation in $C_1$ better
conveys how communications really happen; EPP follows this intuition.}

Choreography $C_2$, in contrast, is not congruent to $\{\sel{\pid p}{\pid
q}{\ell},\sel{\pid p}{\pid r}{\ell}\}\thru\gamma;\nil$. Specifically, we cannot
use rule \rname{C}{Eta-Split} to merge the two interactions in $C_2$, because
its disjointness premise does not hold ($\pid p$ occurs in both interactions).
Accordingly, the EPP of choreography $C_2$ on $\pid p$ consists of \emph{two}
sends.
\qed
\end{example}

To state the operational correspondence between a choreography and its projection, we need to map the process names
used as ports in a connector mapping $\conn$ to the actual port names used in networks. We define
$\epp{\conn}{}$ to be the set of all automata in the codomain of $\conn$, where each output port $\pid p$ in automaton
$\conn(\gamma)$ becomes $\pidp po_\gamma$ (and likewise for input ports). We define a similar function for automaton state function
 $\astate$.

\begin{theorem}[Operational Correspondence]
  \label{thm:oc-cc-sp-2}
  Let $C$ be a projectable choreography. Then, for all $\sigma$, $\conn$, and $\astate$:
  \begin{description}
  \item[Completeness:] If $C,\sigma,\astate \redM C',\sigma',\astate'$, then $\epp{C,\sigma}{},\epp{\astate}{} \toCEPP
    \epp{C',\sigma'}{},\epp{\astate'}{}$;
  \item[Soundness:] If $\epp{C,\sigma}{},\epp{\astate}{} \toCEPP N,A'$, then $C,\sigma,\astate \redM
C',\sigma',\astate'$ for some $\sigma'$ and $\astate'$
    with~$\epp{C',\sigma'}{} \prec N$ and $\epp{\astate'}{} = A'$.
  \end{description}
\end{theorem}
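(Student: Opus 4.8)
The plan is to prove both directions by induction on the derivation of the given reduction — on $\redM$ for completeness and on $\toCEPP$ for soundness — with the technical core being a correspondence between the two auxiliary labelled-reduction relations: the choreography-level one of Fig.~\ref{fig:compat} and the process-level one of Fig.~\ref{fig:proc-compat}. First I would establish a key lemma: writing $\epp{\conn}{}$ for the port-renamed automata, whenever $\tilde\eta,\sigma,\mu\tr[\phi]\tilde\eta',\sigma',\mu'$ holds, the projection of the processes mentioned in $\tilde\eta$ performs $N,\mu\tr[\phi']N',\mu'$, where $\phi'$ is obtained from $\phi$ by replacing each process name $\pid p$ by the corresponding renamed port (so that $\pid p\flow\pid q$ becomes $\pidp po_\gamma\flow\pidp qi_\gamma$, and similarly for the memory-cell variants), and conversely. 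This is proved by induction on the derivation of the labelled reduction, matching each rule (\rname{C}{SendVal}/\rname{C}{RecvVal}/\rname{C}{SyncVal}/\rname{C}{SyncSel}/\rname{C}{SendSel}/\rname{C}{RecvSel}/\rname{C}{Mem}/\rname{C}{Mon}/\rname{C}{Join}) against its process-level counterpart (\rname{CP}{SendVal}, and so on); the memory-cell bookkeeping and the side condition $(\dagger)$ transfer verbatim because they never mention $\sigma$.

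Given this lemma, the completeness case for \rname{C}{Com} is immediate: the first and third premises of \rname{C}{Com} provide $\astate(\gamma)=\langle s,\mu\rangle$ and $s\tr[\phi]_\gamma s'$, which under $\epp{\conn}{}$ become exactly the corresponding premises of \rname{CP}{Com} (the relevant automaton $a\in\cconn$ is the renamed $\conn(\gamma)$, and its transition relation is unchanged by port renaming), while the lemma supplies the middle premise $N,\mu\tr[\phi']N',\mu'$. The conditional case \rname{C}{Cond} reduces the deciding process $\pid p$ by \rname{CP}{Cond}; the non-deciding processes keep the merged behaviour $\epp{C_1}{\pid r}\sqcup\epp{C_2}{\pid r}$, so reaching precisely $\epp{C_i,\sigma}{}$ relies on properties of the merge operator and the pruning preorder $\prec$ — this is the standard point that forces $\prec$ into the soundness statement. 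Recursion (\rname{C}{Ctx}) follows from the identity $\epp{\rec{X}{C_2}{C_1}}{\pid r}=\rec{X}{\epp{C_2}{\pid r}}{\epp{C_1}{\pid r}}$ together with the induction hypothesis, and \rname{C}{Struct} follows from an auxiliary lemma that EPP is stable under $\precongr$: each precongruence rule of Fig.~\ref{fig:mcr_precongr} leaves every per-process projection unchanged (for \rname{C}{Eta-Eta} and \rname{C}{Eta-Split} this holds because each process occurs in at most one of the swapped or split interaction sets, so its local port actions and their order are unaffected) or changes it only up to merge and pruning (for the conditional- and recursion-related rules).

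For soundness I would invert the CP reduction. A reduction $\epp{C,\sigma}{},\epp{\astate}{}\toCEPP N,A'$ ultimately fires \rname{CP}{Com} on some automaton $a=\epp{\conn}{}(\gamma)$ with label $\phi'$, possibly under \rname{CP}{Par}, \rname{CP}{Ctx}, and \rname{CP}{Struct}. I would first use the precongruence-stability lemma to bring, via $\precongr$ on $C$, the interactions through $\gamma$ that participate in $\phi'$ to the head as a single $\tilde\eta\thru\gamma$ (this is exactly the out-of-order reshuffling enabled by \rname{C}{Eta-Eta} and \rname{C}{Eta-Split}); then the converse direction of the key lemma turns $N,\mu\tr[\phi']N',\mu'$ back into a choreography-level reduction $\tilde\eta,\sigma,\mu\tr[\phi]\tilde\eta',\sigma',\mu'$ with the un-renamed $\phi$, and the shared transition $s\tr[\phi]_\gamma s'$ lets \rname{C}{Com} fire, yielding $C',\sigma',\astate'$. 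The residual network $N$ then projects back only up to $\prec$, because a fired conditional leaves merged branches at the non-deciding processes that the choreography has already specialised; hence the statement is phrased with $\epp{C',\sigma'}{}\prec N$.

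The main obstacle is the soundness inversion together with the interplay between out-of-order execution and the merge operator. Concretely, recovering the choreography interaction from a network reduction that may touch only one process — the ``first half'' of an asynchronous communication, producing a runtime term $\genrecv$ or $\genchoice$ — requires carefully matching the distributed process views back to a single $\tilde\eta$, and the reshuffling step must be shown to preserve both the automaton state $\astate$ and the derivability of \rname{C}{Com}. The pruning bookkeeping in the conditional case, needed to reconcile exact projections with merged branches, is the other delicate point, and it is the reason $\prec$ appears in soundness but not in completeness.
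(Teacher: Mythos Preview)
The paper does not actually supply a proof for this theorem: after the statement it only explains the role of the pruning relation $\prec$ and moves on. Your proposal therefore cannot be compared against the paper's argument, but it follows the standard strategy for operational correspondence in choreographic calculi (as in the works the paper cites, notably \cite{CHY12,CM13,CM16b}): a core lemma relating the labelled reductions of Fig.~\ref{fig:compat} and Fig.~\ref{fig:proc-compat} under the port renaming $\epp{\conn}{}$, a case analysis on the outer rules, and a precongruence-stability lemma to absorb \rname{C}{Struct} and to reshuffle interactions to the head in the soundness direction. Your identification of the conditional/merge interaction as the source of $\prec$, and of the runtime terms $\genrecv$/$\genchoice$ as the device that lets a single-process network step be reflected back into a choreography step, are both on target.

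One point worth sharpening: in the completeness case for \rname{C}{Cond} you need to reach \emph{exactly} $\epp{C_i,\sigma}{}$, not merely something related by $\prec$, since the completeness clause is stated without pruning. This works only if discarding unselected branches is available inside CP's structural precongruence $\precongr$ (so that \rname{CP}{Struct} can close the gap between $\epp{C_1}{\pid r}\sqcup\epp{C_2}{\pid r}$ and $\epp{C_i}{\pid r}$ for $\pid r\neq\pid p$). The paper leaves the definition of CP's $\precongr$ implicit by deferring to ``the standard way~\cite{CM16b}'', so you should state this dependency explicitly rather than folding it into the soundness discussion.
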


\noindent
In the soundness result, the \emph{pruning relation} $\prec$~\cite{CHY12,CM13} states that the
processes in $N'$ may offer more branches than those present in
$\epp{C',\sigma'}{}$, but these are never selected~\cite{CHY12,LGMZ08,DGGLM17}.

In particular, if $\vdash^\conn C$, then $\epp{C,\sigma}{}$ is guaranteed to be deadlock-free when executed with all
automata in $\epp{\conn}{}$ in their initial states.

\begin{example}[Book sale]{}
For any connector mapping $\conn$, the process network in Ex.~\ref{ex:motivating+impl} operates
under $\epp{\conn}{}$ exactly as the choreography in Ex.~\ref{ex:motivating+cr} under $\conn$.
In particular, if $\conn$ respects the original choreography, then this implementation never
deadlocks under $\epp{\conn}{}$.
\qed
\end{example}

\section{Conclusions}
\label{sec:conclusions}

Choreographic approaches to concurrent programming have been heavily
investigated~\cite{Aetal16,Hetal16}, but they
typically adopt some fixed (and restrictive)
communication semantics (like point-to-point synchronous). CR is the first model that modularly integrates choreographies
with what runs ``under the hood'' of communications,
allowing for user-defined communication semantics given as connectors. Thanks to
compatibility (Definition~\ref{defn:compat}), CR inherits the good properties of both Choreographic
Programming and Exogenous Coordination.
Thus, we have significantly extended the applicability of
choreographies: not only can we capture new kinds of behaviours in choreographies (like barriers,
cf.\ Ex.~\ref{ex:motivating+conn}, and alternators, cf.\
Ex.~\ref{ex:motivating+opposite}),
but we can even define systems that integrate different parts with
different communication semantics \emph{and} check whether such integration will lead to deadlocks.
This is essential in many real-world scenarios, where different components with different
communication semantics are usually combined (e.g., some microservices in a distributed system
may asynchronously exchange data to be used later in a synchronous multiparty transition, similarly
to our example).

This work lays the foundations for applying the combined power of choreographies and
connectors to the challenge of concurrent programming, in that
CR contains all the necessary foundations to obtain a concrete implementation.
The results in \S~\ref{sec:processes} specify how to use CR to obtain code in a process model
supported by connectors.
Thus, a natural next step will be to implement CR by combining implementations of
processes generated from choreographies~\cite{chor:website,DGGLM17} with distributed
implementations of Reo
connectors~\cite{jongmans2015partially,proencca2011synchronous,proencca2012dreams}.
The main pieces exist; the main challenge lies in their effective composition, and CR is the
first essential step towards this objective.

CR also provides a very explicit direction for future developments of this new combined research
line: allowing for more kinds of connectors would make the model immediately more
expressive.
By relaxing the requirements we imposed on the automata in CR (see page~\pageref{constraints}), we can introduce non-deterministic
communication semantics to choreographies, to cater to applications that require lossy channels
and safe communication races.
Likewise, a more fine-grained semantics that splits communications into two independent send and
receive actions (similar to~\cite{CM17:ice}) would enrich the class of behaviours
that are captured.

We have followed the traditional approach of viewing choreographies as precise specifications of the intended
interactions. However, it would be reasonable to allow choreographies to underspecify communications, such that the
underlying connectors were allowed to exchange messages also to participants not defined in the choreography. For
example, the semantics for the choreography term $\gencom\thru\gamma$ can allow $\gamma$ to send the message from
$\pid p$ to $\pid q$ via an intermediate process $\pid r$ that may perform additional actions (like logging the
message, or sharing it through another connector).
This generalisation can, in particular, provide a novel way for studying how choreographies can be applied to open-ended systems,
where the processes projected from \emph{multiple} choreographies execute in parallel and share
connectors.

\bibliography{biblio}

\end{document}